\newcommand{\G}{{\mathcal G}}
\newcommand{\Exp}{\mathbb{E}}
\newcommand{\payoff}{\text{payoff}}
\newcommand{\MP}{\text{MP}}
\newcommand{\Q}{\mathbb{Q}}
\newcommand{\Nat}{\mbox{I$\!$N}}
\newcommand{\N}{\mbox{I$\!$N}}
\newcommand{\R}{\mbox{I$\!$R}}
\newcommand{\Z}{\mathbb{Z}}
\newcommand{\Min}{{Min}\xspace}
\newcommand{\Max}{{Max}\xspace}
\newcommand{\Killer}{{Cons}\xspace}
\newcommand{\Surv}{{Prod}\xspace}
\newcommand{\thresh}{\textsc{Th}}
\newcommand{\PO}{Player~$1$\xspace}
\newcommand{\PT}{Player~$2$\xspace}
\newcommand{\PLi}{Player~$i$\xspace}
\newcommand{\RT}{\text{RT}}
\newcommand{\play}{\text{play}}
\renewcommand{\path}{\text{path}}
\newcommand{\OCG}{\text{OCG}}
\newcommand{\St}{\mbox{St}}
\newcommand{\Po}{\mbox{Po}}
\newcommand{\stam}[1]{}
\newcommand{\short}[1]{}
\newcommand{\zug}[1]{\langle #1  \rangle}
\newcommand{\set}[1]{\{ #1 \}}
\newcommand{\THRESHBUD}{THRESH-BUDGET\xspace}
\newtheorem{lemma}{Lemma}
\newtheorem{theorem}[lemma]{Theorem}
\newtheorem{proposition}[lemma]{Proposition}
\newtheorem{xmpl}[lemma]{Example}
\newenvironment{example}{\begin{xmpl}\rm}{\end{xmpl}}
\newtheorem{rmark}[lemma]{Remark}
\theoremstyle{definition}
\newtheorem{definition}[lemma]{Definition}
\title{Infinite-Duration Bidding Games\thanks{This paper is based of the conference publication \cite{AHC17}. This research was supported in part by the Austrian Science Fund (FWF) under grants S11402-N23 (RiSE/SHiNE), Z211-N23 (Wittgenstein Award), and M 2369-N33 (Meitner fellowship).}}
\author[1]{Guy Avni\thanks{guy.avni@ist.ac.at}}
\author[1]{Thomas A. Henzinger\thanks{tah@ist.ac.at}} 
\author[2]{Ventsislav Chonev\thanks{vencho@mpi-sws.org}}
\affil[1]{IST Austria}
\affil[2]{Max Planck Institute for Software Systems (MPI-SWS)}
\date{}
\begin{document}
\maketitle

\begin{abstract}
Two-player games on graphs are widely studied in formal methods as they model the interaction between a system and its environment. The game is played by moving a token throughout a graph to produce an infinite path. There are several common modes to determine how the players move the token through the graph; e.g., in turn-based games the players alternate turns in moving the token. We study the {\em bidding} mode of moving the token, which, to the best of our knowledge, has never been studied in infinite-duration games. The following bidding rule was previously defined and called Richman bidding. Both players have separate {\em budgets}, which sum up to $1$. In each turn, a bidding takes place: Both players submit bids simultaneously, where a bid is legal if it does not exceed the available budget, and the higher bidder pays his bid to the other player and moves the token. The central question studied in bidding games is a necessary and sufficient initial budget for winning the game: a {\em threshold} budget in a vertex is a value $t \in [0,1]$ such that if \PO's budget exceeds $t$, he can win the game, and if \PT's budget exceeds $1-t$, he can win the game. Threshold budgets were previously shown to exist in every vertex of a reachability game, which have an interesting connection with {\em random-turn} games -- a sub-class of simple stochastic games in which the player who moves is chosen randomly. We show the existence of threshold budgets for a qualitative class of infinite-duration games, namely parity games, and a quantitative class, namely mean-payoff games. The key component of the proof is a quantitative solution to strongly-connected mean-payoff bidding games in which we extend the connection with random-turn games to these games, and construct explicit optimal strategies for both players.
\end{abstract}

\section{Introduction}
Two-player infinite-duration games on graphs are an important class of games as they model the interaction between a system and its environment. Questions about the automatic synthesis of a reactive system from its specification \cite{PR89} can be reduced to finding a winning strategy for the ``system'' player in a two-player game. The game is played by placing a token on a vertex in the graph and allowing the players to move it through the graph, thus producing an infinite {\em play}. The qualitative winner or quantitative payoff of the game is determined according to the play. There are several common modes  to define how the players move the token, which are used to model different types of systems. The most well-studied mode is {\em turn-based}, where the vertices are partitioned between the players and the player who controls the vertex on which the token is placed, moves it. Other modes include {\em probabilistic} and {\em concurrent} moves (see \cite{AG11}).

We study {\em bidding} games in which the mode of moving is ``bidding''. Intuitively, in each turn, an auction determines which player moves the token. A concrete bidding rule, which was defined and studied for finite-duration games in \cite{LLPSU99,LLPU96}  is called {\em Richman bidding} (named after David Richman). Both players have budgets, and in each turn a bidding takes place: The players simultaneously submit bids, where a bid is legal if it does not exceed the available budget, the higher bidder pays the other player, and moves the token. Ties can occur and one needs to devise a mechanism for resolving them (e.g., giving advantage to \PO), but our results do not depend on a specific mechanism. 

Bidding arises in many settings that are relevant for several communities within Computer Science, and we list several examples below. In Formal Methods, the players in a two-player game often model concurrent processes. Bidding for moving can model an interaction with a scheduler. The process that wins the bidding gets scheduled and proceeds with its computation. Thus, moving has a cost and processes are interested in moving only when it is critical. Bidding for moving can thus be used to obtain a richer notion of {\em fairness}. When and how much to bid can be seen as quantifying the resources that are needed for a system to achieve its objective. Other takes on this problem include reasoning about which input signals need to be read by the system at its different states \cite{CMH08,AKK15} as well as allowing the system to read chunks of input signals before producing an output signal \cite{HL72,HKT12,KZ14}. Also, our bidding game can model {\em scrip systems} that use internal currencies in order to prevent ``free riding'' \cite{KFH12}; namely, agents who use the resources provided by the system without making their own contribution. Such systems are successfully used in various settings such as databases \cite{SA+96}, group decision making \cite{RSK07}, resource allocation, and peer-to-peer networks (see \cite{JSS14} and references therein). In Algorithmic Game Theory \cite{NRTV07}, auction design is a central research topic that is motivated by the abundance of auctions for online advertisements \cite{Mut09}. Repeated bidding is a form of a sequential auction \cite{LST12b}, which is used in many settings including online advertising. Infinite-duration bidding games can model ongoing auctions and can be used to devise bidding strategies for objectives like: ``In the long run, an advertiser's ad should show at least half of the time''. In Artificial Intelligence, bidding games have been used to reason about combinatorial negotiations \cite{MKT18}. 

Recall that ``bidding'' is a mode of moving and can be studied in combination with any objective. Bidding {\em reachability} games were studied in \cite{LLPU96,LLPSU99}: \PO has a target vertex and an infinite play is winning for him iff it visits the target. The central question that is studied regards a necessary and sufficient budget to guarantee winning, called the {\em threshold budget}. Formally, we assume that the budgets add up to $1$. The threshold budget is a function $\thresh: V \rightarrow  [0,1]$ such that if \PO's budget exceeds $\thresh(v)$ at a vertex $v$, then he has a strategy to win the game from $v$. On the other hand, if \PT's budget exceeds $1-\thresh(v)$, he can win the game from $v$. We illustrate the bidding model and threshold budgets in the following example.

\begin{example}
\label{ex:reach}
Consider the reachability bidding game that is depicted in Figure~\ref{fig:reach}. \PO's goal is to reach $t$, and \PT's goal is to prevent this from happening. What is a necessary and sufficient initial budget for \PO to win from $v_0$? We start with a naive solution by showing that \PO can win if his budget exceeds $0.75$. 
Suppose that the budgets are $\zug{0.75 + \epsilon, 0.25-\epsilon}$, for Player~$1$ and~$2$, respectively, for $\epsilon >0$. In the first turn, \PO bids $0.25$ and wins the bidding since \PT cannot bid above $0.25-\epsilon$. He pays his bid to \PT and moves the token to $v_2$. Thus, at the end of the round, the budgets are $\zug{0.5 + \epsilon,0.5 - \epsilon}$ and the token is placed on $v_2$. In the second bidding, \PO bids all his budget, wins the bidding since \PT cannot bid above $0.5-\epsilon$, moves the token to $t$, and wins the game. 

While an initial budget of $0.75$ suffices for winning, it is is not necessary for winning. We continue to show that the necessary and sufficient budget in $v_0$, i.e., the threshold budget, is $2/3$. That is, we show that for every $\epsilon > 0$, \PO can win with a budget of $2/3 + \epsilon$, and if his initial budget is $2/3-\epsilon$, he loses since \PT can force the game to $v_1$. We show a winning strategy for \PO assuming that the initial budgets are $\zug{2/3+\epsilon, 1/3-\epsilon}$. \PO's bid in the first bidding is $1/3$, which he wins since \PT cannot bid beyond $1/3-\epsilon$, and moves the token to $v_2$. The new budgets are $\zug{1/3+ \epsilon, 2/3 - \epsilon}$. Now, \PO bids $1/3 + \epsilon$. If he wins, he proceeds to $t$ and wins the game. Otherwise, \PT wins the bidding, and moves the token back to $v_0$. Since \PT wins the bidding, he must overbid \PO's bid and pay \PO at least $1/3+\epsilon$. In the worst case, the new budgets are $\zug{2/3+2\epsilon, 1/3-2\epsilon}$. In other words, we are back to $v_0$ only that \PO's budget strictly increases. By continuing in a similar manner, \PO forces his budget to increase by a constant. It will eventually exceed $0.75$ from which he can use the naive solution above to win. The same argument shows that \PO wins with a budget of $1/3+\epsilon$ in $v_2$. Showing that \PT wins from $v_0$ with $1/3+\epsilon$ and from $v_2$ with $2/3+\epsilon$, is dual.

To conclude the example, we note that $\thresh(v_1) = 1$, which intuitively means that even with all the budget, \PO cannot win from $v_1$, and $\thresh(t)=0$, which intuitively means that even with no budget, \PO wins from $t$. \hfill\qed
\end{example}

\begin{figure}[ht]
\begin{minipage}[b]{0.4\linewidth}
\centering
\includegraphics[height=1.5cm]{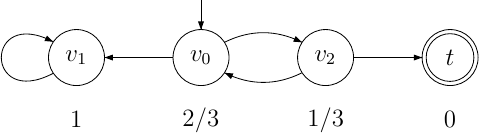}
\caption{A reachability bidding game with the threshold budgets of the vertices.}
\label{fig:reach}
\end{minipage}
\quad
\begin{minipage}[b]{0.4\linewidth}
\centering
\includegraphics[height=1.5cm]{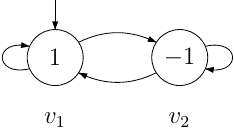}
\caption{A mean-payoff bidding game.}
\label{fig:loops}
\end{minipage}
\end{figure}

It is shown in \cite{LLPU96,LLPSU99} that a threshold budget exists in every vertex of a reachability bidding game. Moreover, it is shown that threshold budgets have the following property: the threshold budget of a vertex $v$ equals $\frac{1}{2}(\thresh(v^+) + \thresh(v^-))$, where $v^+$ and $v^-$ are the successors of $v$ with the maximal and minimal threshold budget, respectively. That is, for every successor $v'$ of $v$, we have $\thresh(v^-) \leq \thresh(v') \leq \thresh(v^+)$. For example, in Example~\ref{ex:reach} we have $\thresh(v_0) = 2/3 = \frac{1}{2}(1 + 1/3) = \frac{1}{2}(\thresh(v_1) + \thresh(v_2))$. 

This property of threshold budgets gives rise to an interesting probabilistic connection. In a {\em random-turn} game, instead of bidding, in each turn, we toss a fair coin. If it turns ``heads'' \PO moves, and if it turns ``tails'', \PT moves. For a reachability bidding game $\G$, we denote by $\RT(\G)$, the random-turn game that is constructed on top of $\G$, which is formally a {\em simple stochastic game}~\cite{Con90} (see Figure~\ref{fig:RT}). It is well-known that every vertex $v$ in $\G$ has a value in $\RT(\G)$, denoted $val(\RT(\G), v)$, which is the probability that \PO wins when both players play optimally. The probabilistic connection for reachability bidding games is the following: for every vertex $v$ in $\G$, $\thresh(v)$ in $\G$ equals $1-val(\RT(\G), v)$. Random-turn based games have been extensively studied in their own right since the seminal paper~\cite{PSSW09}. 

\begin{figure}[ht]
\centering
\includegraphics[height=3cm]{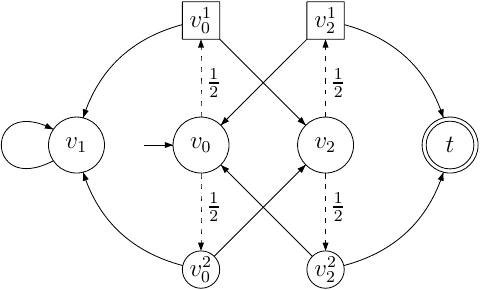}
\caption{The random-turn game that corresponds to the game in Figure~\ref{fig:reach}. The dashed edges are probabilistic and model coin tosses, square vertices are controlled by \PO, and circle vertices by \PT.}
\label{fig:RT}
\end{figure}

We introduce and study infinite-duration bidding games with richer qualitative objectives as well as quantitative objectives. {\em Parity games} are an important class of qualitative games. For example, the problem of reactive synthesis from LTL specifications is reduced to solving a parity game \cite{PR89}. The vertices in a parity game are labeled by an index in $\set{0,\ldots,d}$, for some $d \in \Nat$, and an infinite play is winning for \PO iff the parity of the maximal index that is visited infinitely often is odd. We show that parity bidding games are linearly-reducible to reachability bidding games allowing us to obtain all positive results from these games; threshold budgets exist and the problem of computing them is no harder than for reachability bidding games, which is in turn in NP and coNP due to the probabilistic connection. We find this result somewhat surprising since for most other modes of moving, parity games are considerably harder than reachability games. The key component of the proof considers bottom strongly-connected components (BSCCs, for short) in the game graph, i.e., strongly-connected components with no exiting edges. We show that the BSCCs can be easily classified into those that are ``winning'' for \PO and those that are ``losing'' for him, where in a winning BSCC, \PO wins with any positive initial budget, and in a losing BSCC, \PT wins with any positive initial budget. We can then construct a reachability bidding game by setting the target of \PO to be the winning BSCCs. 
Finally, we ask whether \PO can not only win, but win in a {\em prompt} manner \cite{KPV09}. In {\em B\"uchi} games, which are a special case of parity games, the goal is to visit an accepting vertex infinitely often. We say that \PO wins in a prompt manner if there is a $k \in \Nat$ such that visits to accepting vertices occur within $k$ turns. We show a negative result: under mild assumptions, \PO can never win promptly. That is, with any positive budget, \PT can guarantee arbitrarily long periods with no visits to accepting vertices.

The quantitative games we focus on are {\em mean-payoff} games. The vertices of a mean-payoff game are labeled by weights in $\Z$ and an infinite play has a {\em payoff}, which is the long-run average of the accumulated weights. The payoff is \PO's cost and \PT's reward, thus we refer to the players in a mean-payoff game as {\em Maximizer} (\Max, for short) and {\em Minimizer} (\Min, for short). We adapt threshold budgets to mean-payoff games: we ask what is a necessary and sufficient initial budget  to guarantee a payoff of $0$. We show that threshold budgets exist in mean-payoff bidding games and that finding them is again in NP and coNP.

The key component of the proof, which consists of our most technically challenging result, is a quantitative solution for strongly-connected mean-payoff bidding games by showing an extended probabilistic connection for these games. We show that the optimal payoff \Min can guarantee in a strongly-connected mean-payoff bidding game $\G$ does not depend on his initial budget. More formally, there exists a value $c \in \R$ such that with every positive initial budget, \Min can guarantee a payoff of at most $c$ in $\G$, and he cannot do better: for every $\epsilon >0$ and with any positive budget, \Max can guarantee a payoff that exceeds $c-\epsilon$ in $\G$. Moreover, we show that the optimal payoff $c$ equals the value of the random-turn mean-payoff game $\RT(\G)$. Here, $\RT(\G)$ is a stochastic mean-payoff game and its value is defined as the expected payoff when both players play optimally \cite{MN81}. 

We show a constructive proof for the claim above in which we construct optimal bidding strategies for the two players. Intuitively, the strategies that we construct perform a de-randomization; with a deterministic bidding strategy, the players guarantee that the ratio of the time that is spent in each vertex is the same as in a random behavior. We illustrate our construction in the following example.
Technically, consider an infinite play $\pi$. The {\em energy} of a prefix $\pi^n$ of length $n$ of $\pi$, denoted $E(\pi^n)$, is the sum of the weights that it traverses. The payoff of $\pi$ is $\lim\inf_{n\to \infty} E(\pi^n)/n$.  Note that the definition favors \Min. The strategy we construct for \Min guarantees that an infinite play $\pi$ either has (1) infinitely many prefixes with $E(\pi^n) = 0$, or (2) the energy is eventually bounded, thus there is $N \in \Nat$ such that, after some point $M$, for every $n \in \Nat$ with $n > M$, we have $E(\pi^n) \leq N$. It is not hard to see that this property implies that the payoff of $\pi$ is non-positive. We stress the point that there are two ``currencies'' in the game: the players' budgets are ``monopoly money'' that they do not care about, rather a player's goal is to optimize the payoff, which arises from the weights that are traversed by the play. 

\stam{
We show two proofs for the claim above. The first proof is purely existential. It relies on results on {\em one-counter stochastic games} \cite{BB+10,BBEK11}, which in turn, are equivalent to discrete {\em quasi-birth-death processes} \cite{EWY10} and generalize {\em solvency games} \cite{BKSV08}. The existential proof can be used to determine which player is the winner in a strongly-connected game. It does not, however, provide any insight on how to construct a winning bidding strategy for a player. This is particularly lacking in bidding games, where finding the right bids is the most important and challenging part of constructing a strategy.

We show a different constructive proof for the claim above in which we construct winning strategies for \Min and \Max in strongly-connected mean-payoff bidding game. As we elaborate later in this section, the ideas that were developed in this construction were later used to solve mean-payoff games with other bidding rules, were an existential proof was not possible. We illustrate some of the ideas in the construction of the strategies in the following simple game.
}

\stam{
\begin{figure}[ht]
\begin{minipage}
\begin{center}
\includegraphics[height=1cm]{loops.pdf}
\caption{A mean-payoff bidding game where the weights are depicted on the edges.}
\label{fig:loops}
\end{center}
\end{figure}
}

\begin{example}
\label{ex:tit-for-tat}
Consider the mean-payoff bidding game $\G$ that is depicted in Figure~\ref{fig:loops}. The value of the random-turn game that corresponds to $\G$ is $0$. Indeed, in $\RT(\G)$, \Min always proceeds to $v_2$ and \Max always proceeds to $v_1$. Since the players are selected to move uniformly at random, the game can be seen as a random walk that takes each edge with probability $0.5$ and stays, in the long run, in $v_1$ and in $v_2$ the same portion of the time. We claim that \Min has a deterministic strategy that guarantees a non-positive payoff. It intuitively guarantees that an infinite play stays in $v_2$ for at least half the time. Without loss of generality, \Max always proceeds to $v_1$ upon winning a bidding. \Min's strategy is a {\em tit-for-tat}-like strategy, and he always proceeds to $v_2$ upon winning a bidding. 

The difficulty is in finding the right bids. \Min maintains a queue. When the queue is empty, \Min bids $0$. If the queue is not empty, \Min bids the smallest element in the queue, and removes it upon winning a bidding. If \Max wins a bidding with $b$, then \Min adds $b$ to the queue. For example, suppose \Max bids $\frac{1}{3}, \frac{1}{2}, \frac{1}{6}$ in the first three biddings. \Min's first bidding is $0$, he loses, and adds $\frac{1}{3}$ to the queue. In the second bidding, \Min bids the minimal element $\frac{1}{3}$ in the queue, loses again, and adds $\frac{1}{2}$ to the queue. In the third bidding, \Min wins with his bid of $\frac{1}{3}$, removes it from the queue, and his bid in the fourth bidding is $\frac{1}{2}$. For simplicity, we assume that \Min wins whenever a tie occurs.

We claim that the tit-for-tat strategy guarantees a non-positive mean-payoff value. Intuitively, elements in \Min's queue can be thought of as \Max winnings that are not ``matched'' by a \Min win. Thus, if the size of the queue is $k$, the energy is at most $k$ (this is an upper bound since \Min could win with $0$ bids). In particular, if the queue is empty, the energy is at most $0$. Suppose the minimal element in the queue is $b$. Then, we claim that the size of the queue, and in turn the accumulated energy, is at most $\lceil 1/{b} \rceil$. Indeed, since each bid in the queue represents an ``unmatched'' \Max bid, if the queue size is greater than $\lceil 1/{b} \rceil$, then the sum of \Max's winning bids is more than $1$, which is impossible since he would need to invest more than the total budget. It follows that \Min's strategy guarantees that in an infinite play either (1) the queue empties infinitely often, thus the energy hits $0$ infinitely often, or (2) if there is a point after which the queue stays non-empty, then its size is bounded, hence the energy is bounded. By the above, this property implies a non-positive payoff.\hfill\qed
\end{example}

As the tit-for-tat strategy above demonstrates, the strategies that we construct carefully match changes in budget with changes in energy. The first step in our construction for general strongly-connected games is to assign an ``importance'' to each vertex in the game; the more important a vertex is, the higher a player bids in it. Our definition of importance uses the concept of {\em potentials} in stochastic games (see \cite{Put05}), which were initially used in the context of the strategy iteration algorithm \cite{How60}. In the second component of the proof, we find a bid by carefully normalizing the importance of a vertex. Normalization is easier in \Min's case because of the asymmetry in the definition of payoff. As demonstrated in the tit-for-tat strategy, \Min keeps the energy bounded from above. \Max strategy guarantees that the energy is bounded from below, which is more technically challenging to achieve.

\paragraph*{Results on other bidding mechanisms} 
Since the first publication of this work, further results were obtained on infinite-duration bidding games with other bidding mechanisms. A second bidding rule that was first defined in \cite{LLPSU99} is called {\em poorman} bidding: the winner of a bidding, rather than paying his bid to the loser, pays the bid to the ``bank'', thus the sum of budgets decreases as the game proceeds. Poorman bidding naturally model settings in which the scheduler accepts payment such as miners in block-chain technology or the auctioneer in ongoing auctions. The mathematical structure of reachability poorman-bidding games is more involved than with Richman bidding. Namely, no probabilistic connection is known and it is unlikely to exist. 

Given the probabilistic connection for reachability Richman-bidding games, the probabilistic connection for mean-payoff Richman-bidding games may not be unexpected. The ideas that were developed in the constructions we show here were later used to show a surprising probabilistic connection for mean-payoff poorman-bidding games in \cite{AHI18}, which is in fact richer than the one we observe here for mean-payoff Richman-bidding games. Then, to better understand the curious differences between the seemingly similar bidding rules, infinite-duration bidding games with {\em taxman} bidding are studied in \cite{AHZ19Arxiv}. Taxman bidding, which was also defined in \cite{LLPSU99} and studied for reachability games, span the spectrum between Richman and poorman bidding. A probabilistic connection was shown for these games as well. We elaborate on these results in Section~\ref{sec:poorman}.

\paragraph*{Further related work on bidding games}
Motivated by recreational games, e.g., bidding chess, {\em discrete bidding games} are studied in \cite{DP10}, where the granularity of the bids is bounded by dividing the money into chips. The Richman calculus for  reachability continuous-bidding games is extended to discrete-bidding in \cite{DP10}. Unlike in continuous-bidding, ties play a crucial role in discrete-bidding. The question of which tie-breaking mechanism gives rise to {\em determinacy}  in infinite-duration discrete-bidding games is investigated in \cite{AAH19Arxiv}.  Non-zero-sum two-player games were studied in \cite{MKT18}. They consider a bidding game on a directed acyclic graph. Moving the token through the graph is done by means of bidding. The game ends once the token reaches a sink, and each sink is labeled with a pair of payoffs for the two players that do not necessarily sum up to $0$. They show existence of {\em subgame perfect equilibrium} for every initial budget and a polynomial algorithm to compute it.

\section{Preliminaries}
A graph game is played on a directed graph $G = \zug{V, E}$, where $V$ is a finite set of vertices and $E \subseteq V \times V$ is a set of edges. The {\em neighbors} of a vertex $v \in V$, denoted $N(v)$, is the set of vertices $\set{u \in V: \zug{v,u} \in E}$. We say that $G$ has out-degree $2$ if for every $v \in V$, we have $|N(v)| = 2$. A {\em path} in $G$ is a finite or infinite sequence of vertices $\eta = v_1,v_2,\ldots$ such that for every $i \geq 1$, we have $\zug{v_i,v_{i+1}} \in E$. A {\em strongly-connected component} of $G$ is a set of vertices $S$ such that for every $u,v \in S$ there is a path from $u$ to $v$ in $G$. A {\em bottom strongly-connected component} (BSCC, for short) is a maximal strongly-connected component $S$ that has no outgoing edges, i.e., there are no edges of the form $\zug{v,u}$, where $v \in S$ and $u \notin S$.

\paragraph{Bidding for moving}
A graph game is a two-player game, which proceeds by placing a token on a vertex in a graph and letting the two players move it to produce an infinite {\em play}. The play gives rise to a path that determines the qualitative winner or quantitative payoff of the game. We refer to the mechanism that determines how the token moves as the {\em mode of moving} of the game. For example, the simplest and most well-studied mode of moving is {\em turn-based}; the vertices are partitioned between the two players and the player who controls the vertex on which the token is placed, moves it.

We study a different mode of moving, which we call {\em bidding}. Both players have budgets, where for convenience, we have $B_1 + B_2 = 1$. In each turn, a bidding takes place to determine which player moves the token: Both players simultaneously submit bids, where a bid is a real number in $[0,B_i]$, for $i \in \set{1,2}$, the player who bids higher pays the other player and moves the token. Note that the sum of budgets always remains $1$. While draws can occur, our results are not affected by the tie-breaking mechanism that is used. To simplify the presentation, we fix the tie-breaking mechanism to always give advantage to \PO.

\paragraph{Strategies and plays}
A {\em strategy} is a recipe for how to play a game. It is a function that, given a finite {\em history} of the game, prescribes to a player which {\em action} to take, where we define these two notions below. For example, in turn-based games, a strategy takes as input, the sequence of vertices that were visited so far, and it outputs the next vertex to move to. In bidding games, histories and strategies are more involved since they maintain the information about the bids and winners of the bids. Formally, a history in a bidding game is $\pi = \zug{v_1, b_1, i_1}, \ldots, \zug{v_k, b_k, i_k}, v_{k+1} \in (V \times \R \times \set{1,2})^*\cdot V$, where for $1 \leq j \leq k+1$, the token is placed on vertex $v_j$ at round $j$, for $1 \leq j \leq k$, the winning bid is $b_j$ and the winner is Player~$i_j$. Consider a finite history $\pi$. For $i \in \set{1,2}$, let $W_i(\pi) \subseteq \set{1,\ldots, k}$ denote the indices in which \PLi is the winner of the bidding in $\pi$. We denote by $B_i(\pi)$ \PLi's budget following $\pi$. Let $B^I_i$ be the initial budget of \PLi. \PO's budget following $\pi$ is $B_1(\pi) = B^I_i - \sum_{j \in W_1(\pi)} b_j + \sum_{j \in W_2(\pi)} b_j$, and \PT's budget is defined dually. Given a history $\pi$ that ends in $v$, a strategy for \PLi prescribes an action $\zug{b, v}$, where $b \leq B_i(\pi)$ is a bid that does not exceed the available budget and $v$ is a vertex to move to upon winning, where we require that $v$ is a neighbor of $v_{k+1}$. 

An initial vertex $v_1$, initial budgets, and two strategies $f_1$ and $f_2$ for the players determine a unique infinite {\em play} for the game, which we denote by $\play(v_1, f_1, f_2)$, and we define its prefixes inductively. Let $\pi^1 = v_1$. Assume that for $n \geq 1$, we have defined the prefix $\pi^n = \zug{v_1, b_1, i_1}, \ldots, \zug{v_k, b_k, i_n}, v_{n+1}$, and we define the prefix $\pi^{n+1}$. Let $\zug{b', v'}= f_1(\pi^n)$ and $\zug{b'', v''} = f_2(\pi^n)$ be the two actions proposed by the two players' strategies. Then, if $b' \geq b''$, \PO wins and we define $\pi^{n+1}=\zug{v_1,b_2,i_1},\ldots,\zug{v_{k+1}, b', 1}, v'$. If $b'' > b'$, \PT wins, and we define $\pi^{n+1} = \zug{v_1,b_2,i_1},\ldots,\zug{v_{k+1}, b'', 2}, v''$. The path that $\play(v_1, f_1, f_2)$ traverses is $\path(\play(v_1, f_1, f_2)) = v_1,v_2,\ldots$.

\stam{
\paragraph*{Strategy complexity}
We seek to find ``simple'' strategies for the players. In stochastic games, a {\em positional} strategy is a strategy that always proceeds the same in a vertex. That is, a positional strategy $f_i$ for \PLi, for $i \in \set{1,2}$, is a function $f_i: V_i \rightarrow V$; whenever the token reaches a vertex $v \in V_i$ that is controlled by \PLi, he always proceeds to $f_i(v)$. 

We extend the definition of positional strategies to bidding games, though the right definition is not immediate. A first attempt is to define a positional strategy as a function from vertex and budget to bid and action. But, since budgets are possibly real numbers, it is not clear how such a strategy is implemented. To overcome this issue, we define a positional strategy in a vertex $v \in V$ as a pair $\zug{u, f_v}$, where $u \in adj(v)$ is the vertex to proceed to upon winning and $f_v: [0,1] \rightarrow [0,1]$ is a function that takes the current budget and returns a bid. We require that $f_v$ is {\em simple}, namely a polynomial or a selection between a constant number of polynomials.  A {\em constant memory} 
}

\paragraph{Objectives} 
An objective $O$ is a set of infinite paths. \PO wins an infinite play $\pi$ iff $\path(\pi) \in O$. We call a strategy $f$ {\em winning} for \PO from a vertex $v$ w.r.t. an objective $O$ if for every strategy $g$ of \PT $\play(v, f, g)$ is winning for \PO. Winning strategies for \PT are defined dually. We consider the following qualitative objectives:
\begin{enumerate}
    \item In \textit{reachability games}, \PO has a target vertex $t$ and an infinite play is winning iff it visits $t$. We sometimes use a set of vertices $T$ as the target of \PO, then \PO wins iff a vertex in $T$ is visited.
    \item In \textit{parity games}, each vertex is labeled with an index in $\{1,\dots,d\}$. An infinite path is winning for \PO iff the parity of maximal index visited infinitely often is odd.
    \item \textit{Mean-payoff games} are played on weighted directed graphs, with weights given by a function $w:V \rightarrow \Q$. Consider an infinite path $\eta =v_1,v_2,\dots \in V^\omega$. For $n \in \Nat$, the prefix of length $n$ of $\eta$ is $\eta^n$, and we define its {\em energy} to be $E(\eta^n)=\sum_{i=1}^{n} w(v_i)$. The {\em payoff} of $\eta$ is $\payoff(\eta) = \liminf_{n\rightarrow \infty}E(\eta^n)/n$. \PO wins $\eta$ iff $\payoff(\eta) \geq 0$.
\end{enumerate}

Mean-payoff games are quantitative games. We think of the payoff as \PO's reward and \PT's cost, thus in mean-payoff games, we refer to \PO as \Max and to \PT as \Min. We elaborate on the quantitative solution to mean-payoff games in Section~\ref{sec:MP}.

\paragraph{Threshold budgets}
The first question that arises in the context of bidding games asks what is the necessary and sufficient initial budget to guarantee an objective. We generalize the definition in~\cite{LLPSU99,LLPU96}:

\begin{definition}\label{def:thresh} ({\bf Threshold budgets})
Consider a bidding game $\G$, a vertex $v$, and an objective $O$ for \PO. The threshold budget in $v$, denoted $\thresh(v)$, is a number in $[0,1]$ such that for an initial budget $B_I \in [0,1]$ for \PO we have
\begin{itemize}
\item if $B_I > \thresh(v)$, then \PO has a winning strategy that guarantees $O$ is satisfied, and 
\item if $B_I < \thresh(v)$, then \PT has a winning strategy that violates $O$.
\end{itemize}
\end{definition}

\paragraph*{Random-turn games}
A {\em stochastic game} is played on an {\em arena} $\zug{V_1, V_2, V_N, \Delta, \Pr}$, where for $i \in \set{1,2}$, $V_i$ is a set of vertices that is controlled by \PLi, $V_N$ is a set of probabilistic vertices that is controlled by ``Nature'', where all three sets are disjoint and we denote $V = V_1 \cup V_2 \cup V_N$, $\Delta \subseteq (V_1 \cup V_2) \times V$ is a set of deterministic edges, and $\Pr: V_N \times V \rightarrow [0,1]$ are probabilistic transitions, i.e., for each $v \in V_N$, we have $\sum_{u \in V} \Pr[v,u] = 1$. As in turn-based games, whenever the game reaches a vertex in $V_i$ that is controlled by \PLi, for $i \in \set{1,2}$, he choses how the game proceeds, and whenever the game reaches a vertex $v \in V_N$, the next vertex is chosen probabilistically according to $\Pr$.

Consider a bidding game $\G$ that is played on a graph $\zug{V, E}$. The {\em random-turn game} that is associated with $\G$ is a stochastic game that intuitively simulates the following process. In each turn we throw a fair coin. If it turns ``heads'', then \PO moves the token, and \PT moves if the coin turns ``tails''. See an example in Figure~\ref{fig:RT}. Formally, we define $\RT(\G) = \zug{V_1, V_2, V, \Delta, \Pr}$, where we make two additional copies of each vertex in $V$; for $i \in \set{1, 2}$, we have $V_i = \set{v_i: v \in V}$. Nature vertices simulate the coin toss: for $v \in V$, we have $\Pr[v, v_1] = \Pr[v, v_2] = 1/2$. Reaching a vertex $v_i \in V_i$, for $i \in \set{1,2}$, means that \PLi won the coin toss and gets to choose a neighbor $u \in N(v)$ to move the token to, thus we have $\Delta = \set{\zug{v_i, u}: \zug{v,u} \in E \text{ and } i \in \set{1, 2}}$. 

The objective of \PO in $\RT(\G)$ is the same as his objective in $\G$. When $\G$ is a reachability game, then $\RT(\G)$ is called a {\em simple stochastic game} \cite{Con92} and the target is the same as in $\G$. When $\G$ is a mean-payoff game, then $\RT(\G)$ is a stochastic mean-payoff game. The weight of $v_1,v_2$, and $v$ all equal the weight of $v$ in $\G$. 

The following definitions are standard, and we refer the reader to \cite{Put05} for more details. Two strategies $f_1$ and $f_2$ for the two players and an initial vertex $v$ give rise to a probability distribution $D(v, f_1, f_2)$ over infinite paths that start in $v$.

\begin{definition}
({\bf Values in stochastic games})
Consider a stochastic game $\G$. When $\G$ is a qualitative game with objective $O$, the {\em value} in a vertex $v$ in $\G$, denoted $val(\G, v)$, is $\sup_{f_1} \inf_{f_2} \Pr_{\eta \sim D(v, f_1, f_2)}[\eta \in O]$. When $\G$ is a mean-payoff game, the value in $v$ is $\sup_{f_1} \inf_{f_2} \Exp_{\eta \sim D(v, f_1, f_2)}[\payoff(\eta)]$. For the objectives we consider, positional optimal strategies exist.
\end{definition}

The existence of positional optimal strategies implies that by letting \PT choose his strategy before \PO, i.e., switching the order in the definitions to $\inf_{f_2} \sup_{f_1}$, we obtain the same value. Moreover, restricting one or both of the players to use only positional strategies does not change the value.

\section{Qualitative Bidding Games}
We start by surveying the results of \cite{LLPU96,LLPSU99} on reachability games before moving to study parity bidding games. The model that is studied in \cite{LLPU96,LLPSU99} uses a slightly different definition of reachability games, which we call {\em double-reachability} games: both players have a target, which we denote by $v_R$ and $v_S$ for ``reach'' and ``safe'', and the game ends once one of the targets is reached. We assume that all vertices apart from $v_R$ and $v_S$ have at least one path to both $v_R$ and $v_S$. We later show that reachability bidding games are equivalent to double-reachability bidding games. 

\begin{theorem}
\label{thm:reach}
\cite{LLPU96,LLPSU99}
Consider a double-reachability bidding game $\G = \zug{V,E, v_R, v_S}$. Then $\thresh(v_R) = 0$ and $\thresh(v_S) = 1$, and for all other vertices $v \in V \setminus \set{v_R, v_S}$, we have $\thresh(v) = \frac{1}{2}\big( \thresh(v^+) + \thresh(v^-)\big)$, where $v^-, v^+ \in N(v)$ are such that for every $v' \in N(v)$, we have $\thresh(v^-) \leq \thresh(v') \leq \thresh(v^+)$. Moreover, for every vertex $v \in V$, we have $\thresh(v) = 1-val(\RT(\G), v)$.
\end{theorem}
\begin{proof}
We describe the key ideas in the proof for completeness. Consider two optimal memoryless strategies $f_1$ and $f_2$ in $\RT(\G)$. For $v \in V \setminus \set{v_S, v_R}$, we define $v^-, v^+ \in N(v)$ according to these strategies: let $v^-= f_1(v_1)$ and $v^+ = f_2(v)$. It is not hard to see that $val(\RT(\G), v_S) = 0$ and $val(\RT(\G), v_R) = 1$, and for every $v \in V \setminus \set{v_S, v_R}$, we have $val(\RT(\G), v) \in (0,1)$ and $val(\RT(\G), v) = \frac{1}{2}\big(val(\RT(\G), v^+) + val(\RT(\G), v^-)\big)$. We claim that if \PO's budget $B_I$ at $v \in V$ exceeds $1-val(\RT(\G), v)$, then he wins the game. Thus, we show that $\thresh(v) \leq 1-val(\RT(\G), v)$. The proof for the other direction is dual.

For $v \in \set{v_S, v_R}$, the claim is trivial. Let $B(v) = 1-val(\RT(\G), v)$ and $\epsilon>0$  be \PO's ``surplus'', namely $B_I = B(v) + \epsilon$. Intuitively, \PO's strategy ensures that he either wins the game or his surplus increases by a constant. For $u \in V \setminus \set{v_S, v_R}$, let $b(u) = \frac{1}{2}\big(val(\RT(\G), u^-) - val(\RT(\G), u^+)\big)$, which by this theorem, is equivalent to $\frac{1}{2}(\thresh(v^+) - \thresh(v^-))$. For example, in the game that is depicted in Figure~\ref{fig:reach}, we have $\thresh(v_1) = 1$, $\thresh(v_2) = 1/3$, and $b(v_0) = (1-1/3)/2 = 1/3$.

Let $n = |V|$ and, for $1 \leq i \leq n$, we define $\epsilon_i = \epsilon\cdot 2^{-i}$. Until he loses a bidding, assuming $0 \leq j \leq n-1$ turns have passed and the token is placed on a vertex $u \in V$, \PO bids $b(u) + \epsilon_{n-j}$ and proceeds to $u^-$ upon winning. Thus, \PO's bid consists of two parts: the major part is $b(u)$ and the minor part is $\epsilon_{n-j}$. We show that no matter the outcome of the bidding, assuming the game continues to $u'$, \PO's budget exceeds $B(u')$. If \PO wins the bidding, he moves the token to $u^-$. His budget exceeds $B(u^-)$ since  $B(u) - b(u) = B(u^-)$ and $\sum_{1 \leq i \leq n} \epsilon_i \leq \epsilon$. On the other hand, if \PT wins, in the worst case, he proceeds to $u^+$ and pays \PO at least $b(u) + \epsilon_{n-j}$. Then, \PO's budget exceeds $B(u^+)$ since $B(u) + b(u) = B(u^+)$ and $\sum_{0 \leq \ell < i} \epsilon_{n-\ell} < \epsilon_{n-j}$. Moreover, the difference in the inequality is at least $\epsilon_n$. Since $B(v_S) = 1$ and the total budget is $1$, the invariant implies that \PT cannot win the game. It is not hard to show that if \PO wins $n$ biddings, he wins the game. Finally, suppose \PO loses the $i$-th bidding, for $1 \leq i \leq n$, then his surplus increases by at least $\epsilon \cdot 2^{-n}$. By repeatedly following this strategy, if \PO does not win the game, his budget will eventually be close enough to $1$, and he can switch to a strategy that wins $n$ biddings in a row and proceeds to $v_R$ on the shortest path.
\end{proof}

\begin{example}
Consider the reachability game $\G$ that is depicted in Figure~\ref{fig:reach} and $\RT(\G)$ that is depicted in Figure~\ref{fig:RT}. The optimal strategy for \PO in $\RT(\G)$ proceeds from $v^1_0$ to $v_2$ and from $v^1_2$ to $t$. The optimal strategy for \PT proceeds from $v^2_2$ to $v_0$ and from $v^2_0$ to $v_1$. We have $1-val(\RT(\G), v_1) = 1 = val(\RT(\G), t)$, $val(\RT(\G), v_0) = \frac{1}{2}(val(\RT(\G), v_2) + val(\RT(\G), v_1)$, and $val(\RT(\G), v_2) = \frac{1}{2}(val(\RT(\G), v_0) + val(\RT(\G),t)$. Note that for every vertex $v$, we have $val(\RT(\G), v) = 1-\thresh(v)$. For example, $val(\RT(\G), v_0) = 1/3 = 1-\thresh(v_0)$.\hfill\qed
\end{example}

The following proposition makes the equivalence between reachability and double-reachability bidding games precise. Also, unlike Theorem~\ref{thm:reach}, it handles  double-reachability games in which only one player has a target, which will be important in our solution to infinite-duration games. Consider a reachability bidding game $\G = \zug{V, E, t}$. Let $S \subseteq V$ be the set of vertices that have no path to $t$, and let $T \subseteq V$ be the set of vertices that have no path to vertices in $S$. Note that every vertex in $T$ has a path to $t$ but there might be cycles that are contained in $T$ that do not cross $t$. Intuitively, we construct a double-reachability game from $\G$ by merging the vertices in $T$ into the target of \PO and merging the vertices in $S$ into the target of \PT. Formally, the double-reachability game that corresponds to $\G$ is $\text{DR}(\G) = \zug{V', E', v_R, v_S}$, where $V' = (V \setminus (S \cup T))\cup \set{v_R, v_S}$, $E' = E \cap (V' \times V') \cup \set{\zug{v,v_R}: \exists u\in T \text{ s.t. } \zug{v,u} \in E} \cup \set{\zug{v,v_S}: \exists u\in S \text{ s.t. } \zug{v,u} \in E}$.

\begin{proposition}
\label{prop:Richman-Reach}
Consider a reachability bidding game $\G = \zug{V, E, t}$. Let $S \subseteq V$ be the set of vertices that have no path to $t$, and let $T \subseteq V$ be the set of vertices that have no path to vertices in $S$. For every $v \in S$, we have $\thresh(v) = 1$, for every $v \in T$, we have $\thresh(v) = 0$, and for every $v \in V \setminus (S \cup T)$, we have that $\thresh(v)$ in $\G$ equals $\thresh(v)$ in the double-reachability game $\text{DR}(\G)$ that corresponds to $\G$.
\end{proposition}
\begin{proof}
The claim that $\thresh(v) = 1$, for every $v \in S$ is trivial. The proof for $v \in T$ is similar to Theorem~\ref{thm:reach}. Suppose \PO's budget at $v$ is $\epsilon > 0$. Assuming $|T| = n$, he chooses $\epsilon_1>\ldots>\epsilon_n$, and, for $0 \leq j \leq n-1$, he bids $\epsilon_{n-j}$ in the $j$-th bidding. Upon winning a bidding, he proceeds to a vertex that is closer to $t$ than the current vertex. As in the proof of Theorem~\ref{thm:reach}, if \PO wins $n$ biddings, he wins the game, and if he loses a bidding, his budget increases by at least $\epsilon_n$. By repeatedly following this strategy, his budget will eventually suffice for winning $n$ biddings in a row. 

The final case is when $v$ is in $V \setminus (S \cup T)$. Suppose \PO starts in $\G$ with a budget of $\thresh(v) + \epsilon$ in $\text{DR}(\G)$. \PO acts as if his budget is $\thresh(v) + \epsilon/2$ and uses the winning strategy in $\text{DR}(\G)$ to force the game to a vertex in $T$. Then, he uses the strategy above with an initial budget of $\epsilon/2$ to force the game to $t$.
\end{proof}

We proceed to study parity bidding games.

\begin{theorem}
\label{thm:parity}
Parity bidding games are linearly reducible to reachability bidding games. Thus threshold budgets exist in parity bidding games.
\end{theorem}
\begin{proof}
Consider a parity bidding game $\G = \zug{V, E, p}$ and let $S$ be a BSCC in $\G$. We claim that there is $\alpha_S \in \set{0,1}$ such that for every $v \in S$, we have $\thresh(v) = \alpha$. In case of $\alpha=0$, we call $S$ ``winning'' for \PO, and when $\alpha = 1$, we call $S$ ``losing'' for \PO. Let $v \in S$ be the vertex with the maximal parity in $S$. We claim that $S$ is winning for \PO iff $p(v)$ is odd. Suppose $p(v)$ is odd, and the other case is dual. We show that \PO can win from $v$ with an initial budget of $\epsilon > 0$. Proposition~\ref{prop:Richman-Reach} implies that \PO can force the game from any vertex in $S$ to $v$ with any positive initial budget. Indeed, we construct a reachability bidding game $\G_S$ by restricting $\G$ to $S$ and setting the target of \PO to be $v$. Since $S$ is a BSCC, there is no vertex from which there is no path to $v$, thus the proposition implies that the threshold budgets are all $0$. Finally, \PO splits $\epsilon$ into infinitely many pieces $\epsilon_1,\epsilon_2,\ldots$, by defining $\epsilon_i = \epsilon\cdot 2^{-i}$, for $i \geq 1$. Initially, he plays as if his budget is $\epsilon_1$ and forces the game to visit $v$ using the strategy in the reachability game. Once $v$ is visited, he repeats the strategy with an initial budget of $\epsilon_2$. He continues similarly forcing infinitely many visits to $v$. Since $v$ is the vertex with maximal parity in $S$ and it is odd, the strategy guarantees that \PO wins.

We now consider vertices in $V$ that are not in a BSCC. Let $W, L \subseteq V$ be the sets of vertices in $V$ that belong to winning and losing BSCCs for \PO, respectively. Let $W'$ and $L'$ be the sets of vertices with no path to vertices in $L$ and $W$, respectively. Note that $W \subseteq W'$ and $L \subseteq L'$ and as in the above, for every $v \in W'$, we have $\thresh(v) = 0$, and for every $v \in L'$, we have $\thresh(v) = 1$. We construct a double-reachability bidding game $\text{DR}(\G)$ by setting the target for \PO to be $W'$ and the target for \PT to be $L'$. A similar argument to the one above shows that for every $v \in V \setminus (W' \cup L')$, $\thresh(v)$ in $\G$ equals $\thresh(v)$ in $\text{DR}(\G)$, and we are done.
\end{proof}

We adress the computational complexity of finding threshold budgets. We first phrase the problem as a decision problem.
\begin{definition}
The input to the \THRESHBUD problem is a parity bidding game $\G$ and a vertex $v$, and the goal is to decide whether $\thresh(v) \geq 1/2$. 
\end{definition}

It is stated in \cite{LLPSU99} that \THRESHBUD is in NP and not known to be in P. It follows from Theorems~\ref{thm:parity} and~\ref{thm:reach} that \THRESHBUD is linearly reducible to the problem of solving a stochastic reachability game, which is known to be in NP and coNP \cite{Con92}. Thus, we have the following.

\begin{theorem}
For parity bidding games, \THRESHBUD is in NP and coNP.
\end{theorem}

We conclude this section by studying a stronger notion of winning that is called {\em promptness} \cite{KPV09}. 
{\em B\"uchi} games are a special case of parity games in which the maximal parity index is $1$: vertices with parity $1$ are called {\em accepting} and a play is winning for \PO iff it visits an accepting state infinitely often. We show a negative result for B\"uchi bidding games: intuitively, we show that under mild assumptions \PO cannot win promptly.

\stam{
Similarly, for B\"uchi games, we ask whether there is a bound $k$ such that an accepting state is visited every $k$ rounds (rather than the weaker property of visiting it infinitely often). We study promptness in bidding games, and we start with reachability games. The following theorem is an immediate corollary of the proof of Theorem~\ref{thm:richman}.

\begin{theorem}
Let $t \in \Nat$ be the minimal index such that $B_R > R(t, v)$, for some $v \in V$. Then, the reachability player can guarantee reaching $v_R$ in $t$ rounds and the safety player can keep the game from reaching $v_R$ for $t-1$ rounds.
\end{theorem}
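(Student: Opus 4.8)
The plan is to observe that the final theorem is essentially a restatement of the internal structure of the proof of Theorem~\ref{thm:richman}, now read as a statement about promptness rather than merely about who wins. Concretely, recall that in that proof the quantity $R(v,i)$ was defined so that $R(v,0)=0$ if $v=v_R$ and $R(v,0)=1$ otherwise, and $R(v,i)=\frac12\bigl(R(v^+,i-1)+R(v^-,i-1)\bigr)$ for $i\geq 1$, where $v^-$ (resp.\ $v^+$) is the neighbour minimising (resp.\ maximising) $R(\cdot,i-1)$. The intended reading is exactly: \emph{if \PO's budget exceeds $R(v,i)$, he can win within $i$ rounds}, and dually \emph{if \PT's budget exceeds $1-R(v,i)$, he can keep the game away from $v_R$ for at least $i$ rounds}. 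So I would first make precise that ``the minimal index $t$ such that $B_R > R(v,t)$'' is well-defined: since $R(v,i)$ is non-increasing in $i$ (each step replaces $R(v,i-1)$, which is itself an average dominating $R(v^-,i-1)$, by a smaller or equal average — more carefully, one shows by induction $R(v,i)\le R(v,i-1)$) and converges to $R(v)=\thresh(v)$, as soon as $B_R>R(v)$ there is a finite such $t$; and if $B_R\le R(v)$ no such $t$ exists, consistent with \PO not being able to win at all.

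The positive direction then follows verbatim from the inductive step already carried out in the proof of Theorem~\ref{thm:richman}: with budget $R(v,i)+\epsilon$, \PO bids $b_1=\frac12\bigl(R(v^+,i-1)-R(v^-,i-1)\bigr)$; if he wins he moves to $v^-$ and now has budget $R(v^-,i-1)+\epsilon$, and if he loses, \PT moves somewhere with $R$-value at most $R(v^+,i-1)$ and \PO's budget becomes at least $R(v^+,i-1)+\epsilon$; either way the induction hypothesis gives a win in at most $i-1$ further rounds, so in at most $i$ rounds total. Taking $i=t$ gives the claim that the reachability player wins within $t$ rounds.

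For the second half — the safety player can keep the game away from $v_R$ for $t-1$ rounds — I would run the dual induction. The key point is that if \PT's budget exceeds $1-R(v,t-1)$, then by the dual of Theorem~\ref{thm:richman} (the ``proof for \PT is dual'' clause) \PT has a strategy preventing \PO from reaching $v_R$ within $t-1$ rounds; and since $t$ is \emph{minimal} with $B_R>R(v,t)$, we have $B_R\le R(v,t-1)$, hence \PT's budget is $1-B_R\ge 1-R(v,t-1)$, which is exactly the hypothesis needed. I would spell out the dual inductive step: \PT, with budget $1-R(v,i)+\epsilon$ at $v$, bids $\frac12\bigl(R(v^+,i-1)-R(v^-,i-1)\bigr)$; if he wins he moves to $v^+$ (a neighbour maximising $R$, thus ``furthest'' from $v_R$) retaining enough budget, and if he loses, \PO at best moves to $v^-$ but then \PT's budget has grown to at least $1-R(v^-,i-1)+\epsilon$ — and crucially $v^-\neq v_R$ as long as $R(v,i)<1$, so $v_R$ is not reached this round. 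The induction then gives $i-1$ more safe rounds, $i$ in total; with $i=t-1$ this is the stated bound. I expect the only mild obstacle to be bookkeeping around the edge cases: when $R(v)=1$ (so \PO can never win and the statement about $t$ is vacuous on that side) and the off-by-one between ``reach within $t$'' and ``avoid for $t-1$'', which must be tracked carefully against the minimality of $t$; but these are routine given the machinery already developed.
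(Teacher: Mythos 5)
Your proposal is correct and follows essentially the same route as the paper: the paper gives no separate argument but states that the theorem is an immediate corollary of the proof of Theorem~\ref{thm:richman}, whose induction already shows that a budget exceeding $R(v,i)$ wins within $i$ rounds, with the safety half obtained from the dual induction together with the minimality of $t$ (so that $B_R\le R(v,t-1)$), exactly as you spell out. Your added remarks on the monotonicity of $R(v,\cdot)$ and the boundary bookkeeping are consistent elaborations of what the paper leaves implicit.
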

} 

\begin{theorem}
\label{thm:prompt parity}
Consider a strongly-connected B\"uchi bidding game $\G = \zug{V,E}$ and let $F \subseteq V$ be a set of accepting vertices. If $\G$ contains a cycle that does not traverse a vertex in $F$, then for every $k \in \Nat$ and every  initial positive budget, \PT can force the game not to visit an accepting vertex for at least $k$ turns.
\end{theorem}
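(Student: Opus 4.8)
The plan is to reduce the statement to the key claim underlying Lemma~\ref{lem:winning}: a player wins a Richman game in which only his own target is reachable, with every initial positive budget. We may assume $C$ is a simple cycle $u_0 \to u_1 \to \cdots \to u_{\ell-1} \to u_0$, since a cycle avoiding $F$ always contains a simple one avoiding $F$, and we set $N = k\ell$; we will in fact force the token to follow $C$ for $N \ge k$ consecutive rounds.

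First I would build an auxiliary reachability bidding game $\G_k$ that tracks progress along $C$. Its vertices are pairs $\zug{v,j}$, where $v \in V$ is the current vertex and $j \in \set{0,\ldots,N}$ counts how many $C$-edges have just been traversed in a row: from $\zug{u_i,j}$ with $j < N$, traversing the $C$-edge $u_i \to u_{(i+1)\bmod\ell}$ leads to $\zug{u_{(i+1)\bmod\ell}, j+1}$, while any other edge $v \to w$ leads to $\zug{w,0}$; the states $\zug{v,N}$ are absorbing and form \PT's target. Biddings in $\G_k$ are inherited from $S$, so every play of $\G_k$ projects onto a play of $S$, and a strategy for \PT in $\G_k$ induces one in $S$ that maintains the counter $j$ as a finite-memory component.

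Next I would verify that \PT's target is reachable from every vertex of $\G_k$ and is the only reachable target; equivalently, in the Richman game equivalent to $\G_k$ furnished by Lemma~\ref{lem:Richman-Reach}, every vertex has a path to \PT's target, so \PO has no reachable target. Indeed, from any $\zug{v,j}$ one first follows a path in $S$ from $v$ to $u_0$ (which exists as $S$ is strongly connected) and then follows $C$; the counter never decreases along $C$-edges, so it reaches $N$ after at most $N$ further edges. Invoking the claim proved within Lemma~\ref{lem:winning}, in its symmetric form for \PT, it follows that \PT wins $\G_k$ from every vertex with every initial positive budget. Translating this winning strategy back to $S$, against every strategy of \PO the play reaches some state $\zug{v,N}$, meaning the token has just traversed $N=k\ell$ edges of $C$ consecutively; throughout these $N \ge k$ rounds the token sits only on vertices of $C$, which are disjoint from $F$, so no accepting vertex is visited for at least $k$ rounds.

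The only genuine work is pinning down $\G_k$ together with its reset rule and checking the property that \PO has no reachable target; once that is in place the theorem is immediate from Lemma~\ref{lem:winning}. It is worth noting that the resulting strategy for \PT uses memory $\Theta(\log N)$ to store the counter, which is unavoidable since $k$ is part of the input — in contrast to the memoryless strategies of Theorem~\ref{thm:parity-memoryless}.
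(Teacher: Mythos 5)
Your proof is correct and takes essentially the same approach as the paper's: both construct a finite ``unwinding'' of $S$ with a bounded counter whose top level is \PT's only reachable target, and then invoke the claim underlying Lemma~\ref{lem:winning} (in its form for \PT) to conclude that \PT reaches it from any positive budget. The only cosmetic difference is that your counter tracks consecutive $C$-edges and resets on any deviation, whereas the paper's counts completed traversals of $C$ and resets only on visits to accepting vertices; both yield the stated conclusion.
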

\begin{proof}
Let $C$ be a cycle in $\G$ with no accepting state. We construct a reachability game $Cyc(\G, C, k)=\zug{V \times \set{0,\ldots,k}, E', t'}$ from $\G$ in which we associate \PT in $\G$ with \PO in $Cyc(\G, C, k)$ and his goal is to traverse the cycle $C$ $k$ times in a row. Intuitively, the structure of $Cyc(\G, C, k)$ can be thought of as maintaining a counter: when the token is on the vertex $\zug{v,i}$, it means that $C$ was traversed for $i$ times. We describe $E'$ formally. Let $C = e_1, \ldots, e_n$ be a sequence of edges and for $e \in E$. Let $i \in \set{1,\ldots, k}$ and $e = \zug{v, u} \in E$. Suppose $e$ appears in $C$ but it is not the first edge, thus $e \neq e_1$. Then, we have $\zug{\zug{v, i}, \zug{u,i}} \in E'$, which means that when the game proceeds on an edge in $C$, the counter stays unchanged. When $e = e_1$ is the first edge in $C$ and $i<k$, we increment the counter, thus $\zug{\zug{v, i}, \zug{u, i+1}} \in E'$. When $e$ is not in $C$, we reset the counter and drop to the first level, thus $\zug{\zug{v,i}, \zug{u, 0}} \in E'$. Let $v_0$ be the first vertex in $C$. Then, the target of \PT is $\zug{v_0, k}$, which means that the cycle is traversed $k$ times in a row. 

It is not hard to see that a winning play for \PO in $Cyc(\G, C, k)$ corresponds to traversing the cycle $C$ $k$ times in a row in $\G$. Moreover, since $\G$ is strongly-connected, the target is reachable from all the vertices in $Cyc(\G, C, k)$. Thus, by Proposition~\ref{prop:Richman-Reach}, the threshold budgets are $0$ in all vertices. 

We describe a \PT strategy in $\G$ that ensures that there is no bound on the frequencies of visits to accepting states. Suppose \PT starts with a budget of $\epsilon >0$ in $\G$. He splits his budget into infinitely many parts $\epsilon_1,\epsilon_2,\ldots$. For $i \geq 1$, suppose the token is on $v \in V$. \PT plays according to \PO's winning strategy from $\zug{v, 0}$ in $Cyc(\G, C, i)$ with an initial budget of $\epsilon_i$ to force the game to cycle $C$ $i$ times in a row. Thus, for every $i \geq 1$, there is a sequence of $i \cdot |C|$ with no visit to an accepting state, and we are done.
\end{proof}

\section{Mean-Payoff Bidding Games}
\label{sec:MP}
This section consists of our most technically challenging contribution. We show that threshold budgets exist in mean-payoff bidding games and construct optimal strategies for the players. The key component of the proof is a quantitative solution to strongly-connected mean-payoff bidding games. Similar to the proof structure for parity games, the solution allows us to solve general games by first reasoning about the bottom strongly-connected components of the game and then constructing a reachability game for the rest of the vertices. 

Consider a strongly-connected mean-payoff bidding game $\G$. Recall that a play in a mean-payoff game has a payoff, which is \Min's cost and \Max's reward. Assuming both players start with a positive initial budget, we are intuitively interested in the minimal payoff \Min can guarantee assuming his budget is $r \in (0,1)$.\footnote{We use $r$ for ``ratio'' of the total budget as is used in other bidding mechanisms in which the sum of budgets is not constant. See Section~\ref{sec:poorman}.}
Since $\G$ is strongly-connected and the definition of the payoff is prefix independent, Proposition~\ref{prop:Richman-Reach} implies that the optimal payoff does not depend on the initial vertex. Thus, it is meaningful to refer to the {\em mean-payoff value} of $\G$, which we formally define as follows.

\begin{definition}
({\bf Mean-payoff value}) Consider a strongly-connected game $\G$ and $r \in (0,1)$. The mean-payoff value of $\G$ w.r.t. $r$, denoted $\MP^r(\G)$ is a value $c \in \R$ such that 
\begin{itemize}
\item If \Min's budget is greater than $r$, then he can guarantee that the payoff is at most $c$.
\item \Min cannot do better: for every $\epsilon > 0$, if \Max's initial budget is greater than $1-r$, then he can guarantee a payoff of at least $c-\epsilon$.
\end{itemize}
\end{definition}

We justify the asymmetry in the definition by noting that the definition of the payoff of a play uses $\lim\inf$ and thus gives \Min an advantage.

The following theorem consists of the main technical contribution of this section. It intuitively states that the initial budget does not matter in strongly-connected mean-payoff  bidding games and shows an extended probabilistic connection for these games. In Section~\ref{sec:poorman}, we contrast this property of the Richman-bidding mechanism that we use with the properties of mean-payoff bidding games with other bidding mechanisms. Recall that the mean-payoff value of a vertex in a stochastic mean-payoff game is the expected payoff of the game when both players play optimally. It is not hard to show that since $\G$ is strongly-connected, the mean-payoff values of all the vertices in $\RT(\G)$ is the same, thus it is meaningful to refer to the mean-payoff value of $\RT(\G)$, which we denote by $\MP(\RT(\G))$. 

\begin{theorem}
\label{thm:MP-SCC}
Consider a strongly-connected mean-payoff bidding game $\G$. The mean-payoff value of $\G$ exists and does not depend on the initial budget: there exists $c \in \R$ such that for every $r \in (0,1)$, we have $\MP^r(\G) = c$. Moreover, the value of $\G$ equals the mean-payoff value of the random-turn mean-payoff game $\RT(\G)$ in which in each turn, the player who chooses a move is selected uniformly at random, thus for every $r \in [0,1]$, we have $\MP^r(\G) = \MP(\RT(\G))$. 
\end{theorem}

The two cases of Theorem~\ref{thm:MP-SCC} are proven separately for \Min in Theorem~\ref{thm:MP-min} and for \Max in Theorem~\ref{thm:MP-max} in the following sections. We now describe the theorem's implications. Recall that \Min wins a mean-payoff game if he can guarantee that the payoff is non-positive.
\begin{theorem}
\label{thm:general-MP}
Threshold budgets exist in mean-payoff bidding games. The \THRESHBUD problem for mean-payoff bidding games is in NP and coNP.
\end{theorem}
\begin{proof}
Consider a general mean-payoff bidding game $\G = \zug{V, E, w}$. Consider $v \in V$ that belongs to a BSCC $S$ of $\G$. Let $\G_S$ be the game restricted to $S$. Theorem~\ref{thm:MP-SCC} states that if $\MP(\RT(\G_S)) \leq 0$, then with every positive initial budget, \Min can guarantee a payoff of at most $0$. Thus, the threshold budget in $v$ is $0$. On the other hand, the theorem implies that if $\MP(\RT(\G_S)) > 0$, \Max can guarantee a positive payoff with any positive initial budget, thus the threshold in $v$ is $1$. In the first case, we call $S$ winning for \Min, and in the second case, we call $S$ losing for \Min. We construct a double-reachability game $\text{DR}(\G)$ in which we associate \Min with \PO and set his target to be the set of vertices from which there is no path to losing BSCC, and the target for \Max, which we associate with \PT, is the set of vertices from which there is no path to a BSCC that is winning for \Min. Similarly to the proof of Theorem~\ref{thm:parity}, the threshold budgets in $\text{DR}(\G)$ coincide with the threshold budgets in $\G$. 

Finally, we show that \THRESHBUD is in NP by showing how to verify that $\thresh(v) \geq 1/2$. For each BSCC $S$ in $\G$, we guess positional strategies in the stochastic game $\RT(\G_S)$. In addition, we guess two target sets of vertices $T_1, T_2 \subseteq V$ and construct the reachability stochastic game  $\RT(\text{DR}(\G))$ using them. Finally, we guess two positional strategies in $\RT(\text{DR}(\G))$. We first verify that the strategies are optimal in the mean-payoff stochastic games, which can be done in polynomial time. Thus, we obtain the values in all these games. We use the values to verify our guess of the targets $T_1$ and $T_2$. Namely, we check whether every BSCC $S$ that is winning for \Min is contained in $T_1$ and that there is no path from a vertex in $T_1$ to a BSCC that is winning for \Max, and dually for $T_2$. Finally, we verify that the positional strategies in the reachability stochastic game are optimal. The solution to $\RT(\text{DR}(\G))$ gives us the threshold budget in $v$ and we accept if it is at least $1/2$. The size of the witness is polynomial in the input and the verification of the guess can be done in polynomial time. The algorithm above shows that \THRESHBUD is in coNP since the only change is to accept when $\thresh(v) < 1/2$.
\end{proof}

\subsection{An optimal \Min strategy in strongly-connected mean-payoff bidding games}
\label{sec:Min}
In this section we construct an optimal strategy for \Min in a strongly-connected mean-payoff bidding game. Since the definition of payoff favors \Min, this is technically easier than the construction for an optimal strategy for \Max, which we construct in the following section. 

Consider a strongly-connected mean-payoff bidding game $\G$. In this section, we assume w.l.o.g. that $\MP(\RT(\G)) = 0$ as otherwise we can decrease all weights by this value. We construct a bidding strategy for \Min that, with any positive initial budget, guarantees that the payoff is non-positive.

Recall that the energy of a finite play is the sum of the weights that it traverses. The following lemma shows that it suffices to construct a \Min strategy that keeps the energy bounded from above. 

\begin{lemma}
\label{lem:simplify-Min}
Consider a mean-payoff bidding game $\G$. Suppose that for every positive initial budget $\epsilon > 0$ and initial energy $k_I \in \Nat$, there is a constant $N \in \Nat$ such that \Min has a strategy $f_m$ that keeps the energy bounded by $N$. That is, for every \Max strategy $f_M$ and initial vertex $u$, a finite play $\pi = \play(u, f_m, f_M)$ either reaches energy $0$ or has $E(\pi^n)  \leq N$, for every $1 \leq n \leq |\pi|$. Then, \Min can guarantee a non-positive payoff in $\G$.
\end{lemma}
\begin{proof}
Suppose \Min has a strategy $f_m$ as the above, and we describe a \Min strategy $f'_m$ that guarantees a non-positive payoff. Suppose \Min's initial budget is $\epsilon > 0$. He splits his budget into infinitely many parts $\epsilon_1,\epsilon_2,\ldots$. Initially, \Min plays according to $f_m$ as if his budget is $\epsilon_1$ until an energy of $0$ is reached. When energy $0$ is reached again, he bids $0$ until the energy increases. Once the energy is positive, \Min plays according to $f_m$ with an initial budget of $\epsilon_2$ until an energy of $0$ is reached, and so on. Thus, the strategy guarantees that either (1) an energy of $0$ is reached infinitely often, or (2) if at some point an energy of $0$ is never reached, then the energy stays bounded from above. Recall that the definition of the payoff of an infinite path $\eta = v_1,v_2,\ldots$ is $\payoff(\eta) = \lim\inf_{n \to \infty} E(\eta^n)/n$. Note that an infinite path that satisfies one of the properties (1) or (2) above has a non-negative payoff.
\end{proof}

\noindent{\bf The importance of moving in a vertex.}
The first component of the strategy construction devises a measure of how ``important'' it is to move in each vertex in the game. Our definition relies on the concept of {\em potential}, which was defined in the context of the strategy improvement algorithm to solve stochastic games \cite{How60}.  The potential of $v$, denoted $\Po(v)$, is a known concept in probabilistic models and its existence is guaranteed \cite{Put05}. We formalize the notion of the ``importance'' of moving in a vertex $v$ by defining its {\em strength}, which we denote by $\St(v)$, and is formally the maximal difference in potentials of the neighbors of $v$.

\begin{definition}
({\bf Potentials and strengths})
Consider two optimal positional strategies $f$ and $g$ in $\RT(\G)$, for \Min and \Max, respectively. Recall that when constructing $\RT(\G)$, for every vertex $v \in V$, we add two copies $v_{\Min}$ and $v_{\Max}$, that are controlled by \Min and \Max, respectively. For $v \in V$, let $v^-,v^+ \in V$ be such that $f(v_\Min) = v^-$ and $g(v_\Max) = v^+$. The potential of $v$ is a function that satisfies the following and the strength in $v$ is the difference in potentials:
\[
\Po(v) = \frac{1}{2}\big( \Po(v^+) + \Po(v^-)\big)  + w(v)  - \MP(\RT(\G))
\text{ and }
\St(v) = \frac{1}{2} \big(\Po(v^+) - \Po(v^-)\big)
\]
\end{definition}

There are optimal strategies for which $\Po(v^-) \leq \Po(v') \leq \Po(v^+)$, for every $v' \in N(v)$, which can be found for example using the strategy iteration algorithm.

Consider a strongly-connected mean-payoff bidding game $\G = \zug{V, E, w}$. Consider a finite path $\eta = v_1,\ldots, v_n$ in $\G$. We intuitively think of $\eta$ as a play, where for every $1 \leq i < n$, the bid of \Min in $v_i$ is $\St(v_i)$ and he moves to $v_i^-$ upon winning. Thus, if $v_{i+1} = v_i^-$, we say that \Min won in $v_i$, and if $v_{i+1} \neq v_i^-$, we say that \Min lost in $v_i$. Let $W(\eta)$ and $L(\eta)$ respectively denote the indices in which \Min wins and loses in $\eta$. We call \Min wins {\em investments} and \Min loses {\em gains}, where intuitively he {\em invests} in increasing the energy and {\em gains} budget whenever the energy decreases. Let $G(\eta)$ and $I(\eta)$ be the sum of gains and investments in $\eta$, respectively, thus $G(\eta) = \sum_{i \in L(\eta)} \St(v_i)$ and $I(\eta) =  \sum_{i \in W(\eta)} \St(v_i)$. Recall that the energy of $\eta$ is $E(\eta) = \sum_{1 \leq i <n} w(v_i)$. The following lemma connects the strength, potential, and accumulated energy.

\begin{lemma}
\label{lem:magic}
Consider a strongly-connected game $\G$ with $\MP(\RT(\G))= 0$, and a finite path $\eta$ in $\G$ from $v$ to $u$. Then, $\Po(v) - \Po(u) \geq E(\eta) -  G(\eta) + I(\eta)$. 
\end{lemma}
\begin{proof}
We prove by induction on the length of $\eta$. For $n=1$, the claim is trivial since both sides of the equation are $0$. Suppose the claim is true for paths of length $n$ and we prove for paths of length $n+1$. We distinguish between two cases. In the first case, \Min wins in $v$, thus the second vertex in $\eta$ is $v^-$. Let $\eta'$ be the prefix of $\eta$ starting from $v^-$. Note that since \Min wins the first bidding, we have $G(\eta) = G(\eta')$ and $I(\eta) = \St(v) + I(\eta')$. Also, we have $E(\eta) = E(\eta') + w(v)$. Combining these, we have $E(\eta) - G(\eta) + I(\eta) = E'(\eta) + w(v) - G(\eta') + I(\eta') + \St(v)$. By the induction hypothesis, we have $\Po(v^-) - \Po(u) \geq E(\eta') - G(\eta') + I(\eta')$. Combining these with the definition of $\St(v)$, we have the following. 
\[E(\eta) - G(\eta) + I(\eta) \leq  \St(v) + \Po(v^-) +w(v) - \Po(u) = \]\[=\frac{1}{2}\big(\Po(v^+)-\Po(v^-)\big) + \Po(v^-)+w(v)- \Po(u)= \Po(v) - \Po(u).\]

We continue to the second case in which \Max wins in $v$ and let $v'$ be the second vertex in $\eta$. Recall that we have $\Po(v^+) \geq \Po(v')$. Dually to the first case, we have $G(\eta) = \St(v) + G(\eta')$ and $I(\eta) = I(\eta')$. 
\[ 
E(\eta) - G(\eta) + I(\eta) = E(\eta') - G(\eta') + I(\eta') - \St(v) + w(v) \leq\]
\[\leq \Po(v') - \St(v) + w(v) - \Po(u) \leq \Po(v^+) - \frac{1}{2}\big(\Po(v^+)- \Po(v^-)\big) + w(v) - \Po(u) = \Po(v) - \Po(u).
\]
\end{proof}

\begin{figure}[ht]
\centering
\includegraphics[height=2cm]{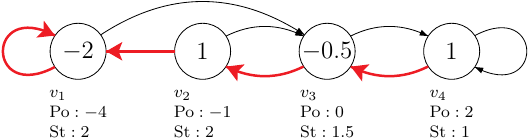}
\caption{A mean-payoff bidding game $\G$ with $\MP(\RT(\G)) = 0$, where we depict the weight of a vertex inside it, its potential and strength below it, and with a bold edge the vertex to which \Min moves the token upon winning a bidding.}
\label{fig:MP-example}
\end{figure}

\begin{example}
Consider the game that is depicted in Figure~\ref{fig:MP-example}. For each vertex $v$, we depict $v^-$ by using a bold edge. We illustrate Lemma~\ref{lem:magic}. Consider the path $\eta = v_3, v_2, v_1, v_1, v_4$, which intuitively corresponds to \Min winning three biddings and then losing two. For example, we have $v^-_3 = v_2$ and $v^-_1 = v_1$, thus when losing the bidding in $v_1$, \Max would proceed to $v_3$. The energy is $E(\eta) = -0.5+1-2-2-0.5 = -4$ (recall that the last vertex does not contribute to the accumulated energy), the gain is $G(\eta) = \St(v_1) + \St(v_4) = 3.5$, the investment is $I(\eta) = \St(v_3) + \St(v_2) + \St(v_1) = 5.5$, and the potentials of the two end points are $\Po(v_3) = 0$ and $\Po(v_4) = 2$. Plugging in the values, we have $\Po(v_3) - \Po(v_4) = 0-2 = -4 -3.5 + 5.5 = E(\eta) - G(\eta) + I(\eta)$.\hfill\qed
\end{example}

\noindent{\bf Normalizing the strengths.}
The second component in \Min's strategy is a normalization of the strengths, which guarantees that the energy is bounded from above. We develop the intuition in the following example.

\begin{example}
\label{ex:invariant}
Consider the  mean-payoff bidding game that is depicted in Figure~\ref{fig:loops}. In Example~\ref{ex:tit-for-tat}, we showed the tit-for-tat \Min strategy that bounds the energy from above, which, by Lemma~\ref{lem:simplify-Min}, suffices to guarantee a non-positive payoff. We show an alternative construction that generalizes to general strongly-connected games. 

Again, \Min always proceeds to $v_2$ upon winning a bidding and we assume \Max proceeds to $v_1$ upon winning a bidding. The difficulty is finding the right bids. It is convenient to assume that the initial energy is a positive number $k_I \in \Nat$ rather than $0$. Suppose that \Min starts with a positive initial budget of $B^{I}_m > 0$. \Min chooses an $N \in \Nat$ such that $B^{I}_m > \frac{k_I}{N}$, which is clearly possible since $k_I$ is a constant and $B_I$ is positive. \Min always bids $\frac{1}{N}$ as long as the energy is positive.

We show that the following invariant is maintained: if the energy level reaches $0 \leq k \in \Nat$, \Min's budget is greater than $\frac{k}{N}$. First, our choice of $N$ implies that the invariant holds initially. Second, assuming that the invariant holds before a bidding, we show that it holds after it. Suppose that the energy is $k$ and \Min's budget is $k/N + \epsilon$. If \Min wins, the energy decreases by $1$ to $k-1$ and his budget decreases by $1/N$ to $(k-1)/N + \epsilon$. On the other hand, if \Max wins the bidding, he bids at least as much as \Min, thus \Min's budget increases by at least $1/N$. The energy increases by $1$ to $k+1$ and \Min's budget increases to $(k+1)/N + \epsilon$. The invariant implies that if the energy does not reach $0$, then it is bounded by $N$. Indeed, if the energy reaches $k=N$, \Min's budget is $N/N + \epsilon$, which is impossible since the sum of budgets is $1$.\qed
\end{example}


We describe the intuition behind \Min's strategy. In the example above, \Min's strategy puts a price of $1/N$ on changing the energy: whenever the energy decreases by $1$, he pays $1/N$, and whenever the energy increases, he gains at least $1/N$. Lemma~\ref{lem:magic} allows us to generalize this connection between changes in energy and changes in budget. In a vertex $v$, \Min bids $\frac{1}{N} \cdot \St(v)$ and proceeds to $v^-$ upon winning. For example, consider the game that is depicted in Figure~\ref{fig:MP-example} and the cycle $\pi = v_3, v_2, v_1, v_3$ that results from \Min winning the two biddings followed by a \Max win. The change in energy is $w(v_3) +w(v_2) + w(v_1) = -0.5+1-2=-1.5$ and \Min's budget decreased by at most $\frac{1}{N} \big( \St(v_3) + \St(v_2) - \St(v_1)\big) = (1.5+2-2)/N=1.5/N$. Thus, \Min invests at most $c/N$ in a decrease of $c$ units of energy, and he gains at least $c/N$ units of budget when the energy increases by $c$ units. A similar argument as in the lemma above shows an invariant between the energy and budget and in turn, that the energy stays bounded from above.

To formally define \Min's strategy we show how to choose $N$ in general graphs, which requires some book-keeping due to paths that are not cycles. We call \Min's strategy $f_m$. Consider a positive initial budget $B \in (0,1]$ for \Min and an initial energy $k_I \in \Nat$. Let $\Po_M=\max_{v \in V} |\Po(v)|$ and $\St_M = \max_{v \in V} |\St(v)|$. We choose $N \in \Nat$ such that $B > \frac{k_I + \St_M + 2\Po_M}{N}$. When the game reaches $v \in V$, \Min bids $\St(v)/N$ and moves to $v^-$ upon winning. 

\begin{lemma}
\label{lem:Min-invariant}
Consider a \Max strategy $f_M$, an initial energy $k_I \in \Nat$, and let $\pi = \play(f_m, f_M)$ be a finite play whose energy stays positive. Thus, for every prefix $\pi^n$, for $0 \leq n \leq |\pi|$, we have $k_I + E(\pi^n) >0$. Let $k = k_I + E(\pi)$ be the energy following $\pi$. Then, \Min's budget following $\pi$ is at least $\frac{k+\St_M}{N}$.
\end{lemma}
\begin{proof}
The invariant clearly holds initially. With a slight abuse of notation, let $G(\pi)$ be the sum of ``gains'' in $\path(\pi)$, namely the sum of strengths in vertices in which \Max wins the bidding, and similarly $I(\pi)$ be the ``investments'' in $\path(\pi)$, namely the sum of strengths in vertices in which \Min wins the bidding. Let $B$ be \Min's initial budget and $B'$ his budget following $\pi$. Since \Min bids $\St(v)/N$ in a vertex $v$, we have $B' = B + \big(G(\pi) - I(\pi)\big)/N$. From Lemma~\ref{lem:magic}, we have $2\Po_M -E(\pi) \geq I(\pi)-G(\pi)$. By combining with $k = k_I + E(\pi)$ and re-arranging, we have 
\[
2\Po_M -E(\pi) \geq  I(\pi)-G(\pi) = N(B- B')
\]
\[
B- \frac{2\Po_M- k+k_I}{N} \leq B'
\]
Since we define $B > \frac{k_I + \St_M + 2\Po_M}{N}$, we obtain that $B' > \frac{k+\St_M}{N}$, and we are done.
\end{proof}

Note that the strategy $f_m$ is legal, i.e., \Min always has sufficient budget to bid according to $f_m$. Indeed, for a choice $N \in \Nat$ made by the strategy, the maximal bid in a vertex in $\G$ is $\St_M/N$. Lemma~\ref{lem:Min-invariant} implies that \Min has sufficient budget for the bid. Moreover, since \Min's budget cannot exceed $1$, Lemma~\ref{lem:Min-invariant} implies that if the energy does not reach $0$, then it is bounded by $N-\St_M$. Combining with Lemma~\ref{lem:simplify-Min}, we obtain the first direction in Theorem~\ref{thm:MP-SCC}.

\begin{theorem}
\label{thm:MP-min}
Let $\G$ be a strongly-connected mean-payoff bidding game with $\MP(\RT(\G))=0$. Then, from every vertex in $\G$ and with any positive initial budget, \Min can guarantee a non-positive payoff.
\end{theorem}

\subsection{An optimal \Max strategy in strongly-connected mean-payoff bidding games}
In this section we focus on the more challenging task of constructing an optimal strategy for \Max: Given a strongly-connected mean-payoff bidding game $\G$ with $\MP(\RT(\G)) > 0$, we construct a bidding strategy for \Max in $\G$ that guarantees a positive payoff. 

The following lemma reduces the problem of optimizing the payoff to the problem of bounding the energy from below.

\begin{lemma}
\label{lem:simplify-max}
Assume that for every \Max initial budget $\epsilon > 0$ in a game $\G$ with $\MP(\RT(\G)) > 0$, he can keep the energy bounded from below by a constant $N(\G, \epsilon) \in \Z$. Then, \Max can guarantee a positive mean-payoff value in $\G$.
\end{lemma}
\begin{proof}
Let $\G'$ be a mean-payoff bidding game that is obtained from $\G$ by decreasing $\MP(\RT(\G))/2$ from all the weights in $\G$. It is not hard to see that $\MP(\RT(\G'))  = \MP(\RT(\G))/2$ and in particular it is positive. Let $\epsilon > 0$, and suppose \Max plays in $\G$ according to a strategy that keeps the energy above $N(\G', \epsilon)$ in $\G'$. For a finite play $\pi$ in $\G$ we have $E(\pi) \geq |\pi| \cdot \MP(\RT(\G))/2 + N(\G', \epsilon)$. Since $N(\G', \epsilon)$ is a constant, its contribution to the payoff vanishes as the length of $\pi$ tends to infinity, thus the payoff is at least $\MP(\RT(\G))/2$, which is positive.
\end{proof}

Bounding the energy from below is more challenging than \Min's goal in the previous section of bounding the energy from above. A first attempt for constructing a \Max strategy would be to use a similar strategy as the previous section only with reversed roles: \Max's strategy would guarantee that whenever the energy is $k \in \Nat$, his budget exceeds $k/N$, for some $N \in \Nat$. He would ensure that whenever the energy increases by one unit, his budget decreases by at most $1/N$, and whenever the energy decreases by one unit, his budget increases by at least $1/N$. This attempt fails since \Min reacts by allowing \Max to win for a while and draw the energy all the way up to $N$, where \Max's budget runs out. When \Min has all (or most) of the budget, he can win an arbitrary number of biddings in a row. Thus, he can draw the energy arbitrary low, causing \Max to lose since the energy would not be bounded from below. The moral of this attempt is that \Max should avoid exhausting his budget. He cannot use a fixed normalization factor of $1/N$. Rather, the normalization factor should decrease as the energy increases. In the next two sections we devise a normalization scheme, first in simpler strongly-connected components and then in general ones.

\subsubsection{An optimal \Max strategy in recurrent mean-payoff bidding games}
A game $\G$ is called {\em recurrent}, if it is strongly-connected and there is a vertex $u \in V$ such that every cycle in $\G$ includes $u$ (see Figure~\ref{fig:recurrent}). We refer to $u$ as the {\em root} of $G$. In this section, we construct an optimal strategy for \Max in a recurrent mean-payoff bidding game.

\noindent{\bf An adapted definition of importance.}
Recall that \Min's strategy in the previous section matches changes in budget with changes in energy. The first component in \Max's strategy makes this connection asymmetric: we find $z > 1$, such that when the energy increases by $c$ units, \Max invests at most $c$ units of budget, but when the energy decreases by $c$ units, \Max gains at least $z \cdot c$ units of budget.

Consider a recurrent mean-payoff bidding game $\G = \zug{V, E, w}$ with $\MP(\RT(\G)) > 0$. We alter the weights to give advantage to \Min. For $z >1$, let $\G^z = \zug{V, E, w^z}$, where
\[
w^z(v) = \begin{cases} w(v) & \text{ if } w(z) \geq 0 \\
z \cdot w(v) & \text{ if } w(z) < 0
\end{cases}
\]
Clearly, $\MP(\RT(\G)) \geq \MP(\RT(\G^z))$. We select $z > 1$ such that  $\MP(\RT(\G^z)) \geq 0$. This is possible since by additively changing all the weights in $\RT(\G)$ by a constant $c$, the value changes by $c$. We select $z$ such that $z \cdot \max_{v \in V} w(v) \leq \MP(\RT(\G))$. 

Consider a finite path $\eta$ in $\G$. The following lemma connects the energy of $\eta$ in $\G$ with its energy in $\G^z$. Note that $E(\eta)$ might be negative thus neither claim follows from the other. Let $E^z(\eta)$ be the accumulated energy in $\G^z$.

\begin{lemma}
\label{lem:E-Ez}
Consider a finite path $\eta \in cycles(u)$. Then, $E(\eta) \geq E^z(\eta)$ and $zE(\eta) \geq E^z(\eta)$. 
\end{lemma}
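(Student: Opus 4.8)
The plan is to compare $E(\pi)$ and $E^z(\pi)$ directly, term by term along $\pi$ (equivalently edge by edge, since by this point the weights sit on the edges). Along $\pi$ let $P \geq 0$ denote the sum of the non‑negative weights traversed and let $-Q \leq 0$ denote the sum of the negative ones, so that $E(\pi) = P - Q$. Since $\G^z$ is obtained from $\G$ by multiplying exactly the negative weights by $z$ and leaving the non‑negative ones untouched, we get $E^z(\pi) = P - zQ$. The first inequality is then immediate: $E(\pi) - E^z(\pi) = (z-1)Q \geq 0$, using that $z > 1$ (which holds because $W(u) > 0$) and $Q \geq 0$. Note this argument uses nothing about $\pi$ being a cycle, so the inequality $E(\pi') \geq E^z(\pi')$ in fact holds for every prefix $\pi'$, which is what makes it usable for transporting an energy bound from $\G^z$ back to $\G$.

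For the second inequality, $E(\pi) \geq z\cdot E^z(\pi)$, I would split on the sign of $E^z(\pi)$. If $E^z(\pi) \leq 0$ then, since $z > 1$, we have $z\cdot E^z(\pi) \leq E^z(\pi)$, and combining with the first inequality gives $z \cdot E^z(\pi) \leq E^z(\pi) \leq E(\pi)$, as desired. This is precisely the regime the lemma is meant to control: an energy loss of $\delta$ units in $\G^z$ costs \Max at most $z\delta$ units of actual energy in $\G$, which is how this lemma feeds into the currency‑switching analysis of \Max's strategy (the mismatch between the $z^{-n}$ and $z^{-(n-1)}$ currencies has to absorb exactly such a discrepancy). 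If $E^z(\pi) > 0$, the inequality rearranges (dividing by $z-1>0$) to the claim $P \leq (z+1)Q$ along $\pi$, which is no longer a pure sign argument.

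I expect this last case to be the main obstacle. For a cycle $\pi \in cycles(u)$ traversed while \Max follows $f_M$ within a single energy block, Lemma~\ref{lem:magic-Max} gives $E^z(\pi) \leq z^n\cdot \B_M(\pi)$, so $E^z(\pi) > 0$ forces a matching expenditure $\B_M(\pi) > 0$; combining this with $W^z(u) = 0$ and with the recurrence of the SCC (every cycle passes through the root $u$, so positive $\G^z$‑energy cannot be accumulated without returning to $u$, where the defining choice of $z$ pins down the balance between positive and negative mass) should bound $P$ in terms of $Q$ and $z$ on any single excursion and close the case. Concretely I would carry out the steps in the order: (i) the weight decomposition and the first inequality; (ii) the $E^z(\pi)\leq 0$ half of the second inequality, directly from $z>1$; (iii) the $E^z(\pi) > 0$ half, via Lemma~\ref{lem:magic-Max} together with the recurrence structure and the definition of $z$ — the step I expect to demand the most care.
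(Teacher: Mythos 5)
Your first inequality is exactly the paper's argument: write $E(\pi)=E^{\geq 0}(\pi)+E^{<0}(\pi)$ and $E^z(\pi)=E^{\geq 0}(\pi)+zE^{<0}(\pi)$ and use $z>1$ together with $E^{<0}(\pi)\leq 0$; as you note, this needs nothing about $\pi$ being a cycle. The trouble is entirely in the second inequality, and you have correctly smelled it but drawn the wrong conclusion. The inequality $E(\pi)\geq zE^z(\pi)$ as printed is a typo: it is genuinely false in general (take a cycle through $u$ traversing only non-negative weights, so that $E^z(\pi)=E(\pi)>0$ and the claim would force $z\leq 1$), and no appeal to Lemma~\ref{lem:magic-Max} or to recurrence can rescue it, because your reduction of the case $E^z(\pi)>0$ to the inequality $P\leq(z+1)Q$ is a constraint purely on the weights traversed by the cycle, whereas Lemma~\ref{lem:magic-Max} only relates $E^z(\pi)$ to \Max's payments and says nothing about the ratio of positive to negative weight mass along an individual cycle. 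What the paper actually proves --- and what is used downstream --- is $zE(\pi)\geq E^z(\pi)$: multiplying $E(\pi)=E^{\geq 0}(\pi)+E^{<0}(\pi)$ by $z$ and subtracting from the expression for $E^z(\pi)$ gives $E^z(\pi)-zE(\pi)=(1-z)E^{\geq 0}(\pi)\leq 0$. This corrected form is precisely what Corollary~\ref{cor:max-budget} records ($zE(\pi)\geq z^n\cdot\B_M(\pi)$, i.e., a decrease in energy yields a gain in the higher currency $z^{-(n-1)}$), and it is the analogue of the second item of Lemma~\ref{lem:general-Max-connecting}; note that there both directions do hold, but only because $\tilde{w}^z$ additionally divides the non-negative weights by $z$, which $w^z$ here does not.

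In short: your step (i) reproduces the paper's proof of the first claim; your step (ii) is correct but covers only half of a statement that is unprovable as written; and your step (iii) is a dead end. The fix is not a harder argument but a corrected statement, $zE(\pi)\geq E^z(\pi)$, which your own decomposition proves in one line: $zE(\pi)-E^z(\pi)=(z-1)P\geq 0$.
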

\begin{proof}
Let $E^{\geq 0}(\eta)$ and $E^{<0}(\eta)$ be the sum of non-negative weights and negative weights in $\eta$, respectively. We have $E(\eta) = E^{\geq 0}(\eta)+E^{<0}(\eta)$ and $E^z(\eta) = E^{\geq 0}(\eta)+zE^{<0}(\eta)$. The inequality $E(\eta) \geq E^z(\eta)$ is immediate. For the second inequality, we multiply the first equality by $z$ and subtract it from the first to get $E^z(\eta) - zE(\eta)= E^{\geq 0}(\eta) - zE^{\geq 0}(\eta)\leq 0$, and we are done.
\end{proof}

We adapt Lemma~\ref{lem:magic} to our setting. We find optimal positional strategies $g_m$ and $g_M$ for \Min and \Max, respectively, in the stochastic game $\RT(\G^z)$. Using them, we define, for each vertex $v \in V$, vertices $v^-$ and $v^+$ by setting $v^- = g_m(v_m)$ and $v^+ = g_M(v_M)$. We respectively denote by $\Po^z$ and $\St^z$, the potential and strength functions of $\G^z$. For a finite path $\eta = v_1,\ldots,v_n$, we denote  $G^z(\eta) = \sum_{v_{i+1} \neq v_i^+} \St^z(v_i)$ and $I^z(\eta) = \sum_{v_{i+1} = v_i^+} \St^z(v_i)$. The proof of the following lemma is dual to the proof of Lemma~\ref{lem:magic}.

\begin{lemma}
\label{lem:magic-max}
For a finite path $\eta$ from $v$ to $u$, we have $\Po^z(v) - \Po^z(u) \leq E^z(\eta) + G^z(\eta) - I^z(\eta)$.
\end{lemma}

Let $pay(\eta) = I^z(\eta) - G^z(\eta)$. Combining the two lemmas above, we obtain the required asymmetry between ``gaining'' and ``investing''.

\begin{lemma}
\label{lem:max-budget}
Consider a path $\eta \in cycles(u)$. When $E(\eta) \geq 0$, we have $pay(\eta) \leq E(\eta)$, and when $E(\eta) < 0$, we have $-pay(\eta) \geq -z \cdot E(\eta)$.
\end{lemma}

\begin{figure}[ht]
\centering
\includegraphics[height=3cm]{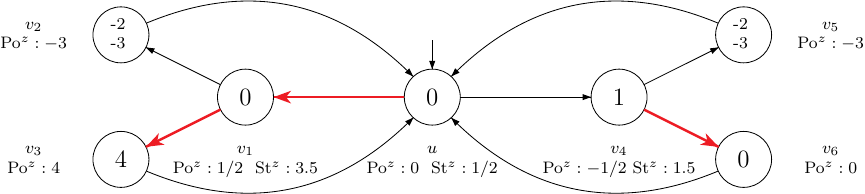}
\caption{A recurrent mean-payoff bidding game $\G$ with $\MP(\RT(\G)) > 0$. For $z=4/3$, the alternated weights in a vertex are depicted below the original weight.}
\label{fig:recurrent}
\end{figure}

\begin{example}
\label{ex:recurrent}
Consider the recurrent mean-payoff bidding game $\G$ that is depicted in Figure~\ref{fig:recurrent}. We have $\MP(\RT(\G)) = 0.5$, thus \Max can guarantee a positive payoff. We illustrate Lemma~\ref{lem:max-budget}. The weights of the vertices in $\G$ are depicted on top, and, for $z=3/2$, the weights of the negative-weighted vertices are depicted below. Consider the path $\eta = u, v_1, v_2, u$. Thus, \Max wins the first bidding and loses the second. The bids in vertices with only one outgoing edge are $0$. With this choice of $z$, we have $\MP(\RT(\G^z))=0$, thus we get equality between energy and budget in $\G^z$. Indeed, the change in energy in $\G^z$ is $E^z(\eta) = w^z(u) + w^z(v_1) + w^z(v_2) = -3$. On the other hand, \Max's gain in $\eta$ is $\St^z(v_1) = 3.5$ and his investment is $\St^z(u) = 0.5$, thus his budget increases by $3.5-0.5 = 3$. However, the ``real'' change in energy is the one exhibited in $\G$, which is $E(\eta) = w(u) + w(v_1) + w(v_2) = -2$. Thus, in a decrease of $2$ units of energy, \Max gains $2\cdot z = 3$ units of budget rather than only $2$. 

The worst case for \Max is in paths that traverse only positive or only negative weights. In paths that traverse a mix of weights, the inequality in Lemma~\ref{lem:max-budget} is strict. For example, consider the path $u, v_4, v_5,u$. The change in energy is $w(u) + w(v_4) + w(v_5) = -1$ and the change in budget is $\St^z(u) + \St^z(v_4) = 0.5+1.5 = 2 > 1 \cdot 3/2$.\qed
\end{example}


\noindent{\bf \Max's strategy.}
When the game reaches a vertex $v$, \Max bids $\St(v) \cdot \gamma$, where $\gamma$ is a normalization factor that depends on the energy in the last visit to $u$. That is, the normalization changes only after visiting $u$. In order to define $\gamma$, we select $N \in \Nat$ and partition the natural numbers into {\em energy blocks} of size $N$. Each energy block is associated with its own normalization, which we call the {\em currency} of the block. Recall that we chose $z > 1$. For $n \in \Nat$, the currency of the $n$-th block is $z^{-n}$. The key idea follows from combining with Lemma~\ref{lem:max-budget}: investing in the $n$-th block is done in the currency of the $n$-th block while gaining in the $n$-block is in the higher currency of the $(n-1)$-th block.

\begin{example}
Consider the game $\G$ that is depicted in Figure~\ref{fig:recurrent}, and consider two plays $\pi_1 = u, v_1,v_3, u$ and $\pi_2 = u, v_1, v_2, u$. Suppose the energy in $u$ is in the $3$-rd block, thus the currency is $1.5^{-3}$. In $\pi_1$, the energy increases, i.e., we have $E(\pi_1) = 4$, and \Max matches his change in budget in the currency of the $3$-rd block, i.e., \Max invests $(0.5 + 3.5)\cdot 1.5^{-3} = 4 \cdot 1.5^{-3}$. On the other hand, in $\pi_2$, we have a decrease of energy, i.e., we have $E(\pi_2) = -2$, and \Max gains $(-0.5 + 3.5) \cdot 1.5^{-3} = 2 \cdot 1.5^{-2}$, thus we have a connection between changes in energy and budget, only in the higher currency of the $2$-nd block.\qed
\end{example}

We choose $N \in \Nat$ as follows. Let $cycles(u)$ be the set of paths that are simple cycles from $u$ to itself. A crucial advantage of recurrent games is that all cycles pass through $u$. Our definition relies on the maximal energy of such a cycle, which we denote by $E_M = \max_{\eta \in cycles(u)} |E(\eta)|$. We choose $N \in \Nat$ such that $N \geq (\St^z_M + 3E_M)/(1-z^{-1})$, where $\St^z_M$ is the maximal strength of a vertex in $\G^z$. For $n \geq 1$, we refer to the $n$-th block as $N_n$, and we have $N_n = \set{N (n-1), N(n-1)+1,\ldots, Nn -1}$.  We use $\beta^\downarrow_n$ and $\beta^\uparrow_n$ to mark the upper and lower boundaries of $N_n$, respectively. We use a  $N_{\geq n}$ to denote the set $\set{N_{n}, N_{n+1}, \ldots}$. Consider a finite play $\pi$ that ends in $u$ and let $visit_u(\pi)$ be the set of indices in which $\pi$ visits $u$. Let $k_I \in \Nat$ be an initial energy. We say that $\pi$ {\em visits} $N_n$ if $k_I + E(\pi) \in N_n$. We say that $\pi$ {\em stays in}  $N_n$ starting from an index $1 \leq i \leq |\pi|$ if for all $j \in visit_u(\pi)$ such that $j \geq i$, we have $k_I + E(\pi_1,\ldots,\pi_j) \in N_n$.

We are ready to describe \Max's strategy, which we denote by $f_M$. \Max chooses $k_I \in \Nat$ and plays as if the initial energy in $k_I$. With the right choice of $k_I$, his strategy will keep the energy non-negative. In turn, assuming that the real initial energy is $0$, we obtain that the energy stays above $-k_I$. Suppose the game reaches a vertex $v$ and the energy in the last visit to $u$ was in $N_n$, for $n \geq 1$. Then, \Max bids $z^{-n} \cdot \St^z(v)$ and proceeds to $v^+$ upon winning.
Consider an initial \Max budget $B^I_M >0$. We choose an initial energy $k_I \in \Nat$ with which $f_M$ guarantees that energy level $0$ is never reached. Recall the intuition that increasing the energy by a unit requires an investment of a unit of budget in the right currency. Thus, increasing the energy from the lower boundary $\beta_n^\downarrow$ of $N_n$ to its upper boundary $\beta_n^\uparrow$, costs $N \cdot z^{-n}$. We define $cost(N_n)=N \cdot z^{-n}$ and $cost(N_{\geq n}) = \sum_{i=n}^\infty cost(N_n)$. A first attempt for the definition of $k_I$ would be $\beta_n^\downarrow$ such that $B^I_M > cost(N_{\geq n})$, which intuitively means that \Max's initial budget would never run out even if he always wins. This is almost correct. We need some {\em wiggle room} to allow for changes in the currency. Also, note that drawing the energy to $0$ from $\beta_n^\downarrow$ would cost \Min a total of $\sum_{i=1}^n cost(N_i)$. We choose $k_I$ so that this cost is greater than $1$, thus we ensure that the energy never reaches $0$.

\begin{definition}
\label{def:k_I}
Let $k_I$ be $\beta_n^\downarrow$ such that $B^I_M > wiggle\cdot z^{-(n-1)}+ cost(N_{\geq n})$, where $wiggle = 2E_M + \St^z_M$, and $\sum_{i=1}^n cost(N_i) > 1$.
\end{definition}

\noindent{\bf Correctness.}
We prove an invariant on \Max's budget throughout the game, which will imply that the energy never reaches $0$ when it starts from $k_I$, and hence the correctness of the strategy.

Consider a \Min strategy $f_m$, and let $\pi=play(f_m, f_M)$ be a finite play. Let $visit_u(\pi) = \tau^1\cdot \ldots \cdot \tau^m$ be a partition of $\pi$ such that for each $1 \leq i \leq m$, the $\path(\tau_i)$ is a cycle-less path that ends in $u$. We define a coarser partition of $\pi$ into sub-plays in which the same currency is used (recall that we change currency at $u$ and when switching between energy blocks). Let $\pi = \pi_1 \cdot \pi_2 \cdot \ldots \cdot \pi_\ell \cdot \pi_{\ell+1}$, where for each $1 \leq i \leq \ell$, we have $\pi_i = \tau^{i_1} \cdot \ldots \tau^{i_{n_i}}$, there is an energy block $N_n$ such that the sub-play $\tau^{i_1} \cdot \ldots \cdot \tau^{i_{n_i-1}}$ stays in $N_n$ and the sub-play $\pi_i$ visits a neighboring energy block $N_{n-1}$ or $N_{n+1}$. We then call $N_n$ the energy block of $\pi_i$. We use $e^i$ to denote the energy at the end of $\pi^i$, thus $e^i = k_I + E(\pi^i)$. Let $N_n$ be the energy block of $\pi_i$. There can be two options; either the energy decreases in $\pi_i$, thus the energy before it $e^{i-1}$ is in $N_{n+1}$ and the energy after it $e^i$ is in $N_n$, or it increases, thus $e^{i-1} \in N_{n-1}$ and $e^i \in N_n$.  We then call $\pi^i$ {\em decreasing} and {\em increasing}, respectively.

Recall that $\beta^\uparrow_n$ and $\beta^\downarrow_n$ are the upper and lower boundaries of the energy block $N_n$. Further recall that $E_M$ is the largest energy of a cycle in $\G$. Thus, whenever the energy enters $N_n$ it is within $E_M$ of the boundary (see Figure~\ref{fig:cases-recurrent}). In the case that $\pi^i$ is decreasing, the energy at the end of $\pi^i$ is $e^i \geq \beta^\uparrow_n - E_M$ and in the case it is increasing, we have $e^i \leq \beta^\downarrow_n + E_M$. Let $\ell_0 = 0$, and for $i \geq 1$, let $\ell_i = (\beta^\downarrow_{n+1} - E_M) - e^i$ in the first case and $\ell_i = (\beta^\downarrow_n + E_M) - e^i$ in the second case. Note that $\ell_i \in \set{0, \ldots, 2E_M}$. We prove the following invariant on \Max's budget when changing between energy blocks.
\begin{lemma}
\label{lem:max-invariant}
For every $i \geq 0$, suppose $\pi^i$ ends in $N_n$. The budget of \Max at the end of  $\pi^i$ is at least $(wiggle + \ell_i) \cdot z^{-(\hat{n} -1)} + cost(N_{\geq \hat{n}})$, where $\hat{n} = n+1$ if $\pi^i$ is decreasing and $\hat{n} = n$ if $\pi^i$ is increasing.
\end{lemma}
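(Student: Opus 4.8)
The plan is to prove Lemma~\ref{lem:max-invariant} by induction on $i$, tracking how \Max's budget evolves as the play passes through successive energy blocks using fixed currencies. The base case $i=0$ is exactly the choice of initial energy $k_I = \beta^\downarrow_n$: we arranged $B^{init}_M > wiggle \cdot z^{-(n-1)} + cost(M_{\geq n})$, and since $\ell_0 = 0$ and $\pi^0$ is (vacuously) increasing so $\hat{n} = n$, this is precisely the claimed bound. For the inductive step, I would assume the invariant holds at the end of $\pi^{i-1}$, which ends in some block $M_{n'}$, and analyze the single segment $\pi_i$ that stays within one block $M_n$ and connects $e^{i-1}$ to $e^i$.

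The key computation is to bound \Max's net payment $\B_M(\pi_i)$ over the segment $\pi_i$. Since $\pi_i$ stays in $M_n$, it is built from cycles through $u$ (because $\G$ is recurrent), so Corollary~\ref{cor:max-budget} applies to each such cycle: if the cycle's energy change $E$ is non-negative, then $E \geq z^n \B_M$, i.e. \Max invests at most $z^{-n}$ per unit of energy gained; if $E < 0$, then $zE \geq z^n \B_M$, i.e. $\B_M \leq z^{-(n-1)} E < 0$, so \Max gains at least $z^{-(n-1)}$ per unit of energy lost. Summing telescopically over the cycles composing $\pi_i$, and splitting into the increasing and decreasing cases, I would show: in the \textbf{increasing} case ($e^{i-1} \in M_{n-1}$, $e^i \in M_n$), \Max spends at most $(e^i - e^{i-1}) z^{-n} \le (M + w_M - \ell_{i-1}) z^{-n}$ (using $e^{i-1} \ge \beta^\downarrow_{n-1} + w_M - \ell_{i-1}$ and $e^i \le \beta^\downarrow_n + w_M - \ell_i$, plus $\ell_i \ge 0$); in the \textbf{decreasing} case ($e^{i-1} \in M_{n+1}$, $e^i \in M_n$), \Max gains at least $(e^{i-1}-e^i) z^{-n}$. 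Then I combine with the inductive bound on the budget at the end of $\pi^{i-1}$, carefully accounting for the currency factor $z^{-(\hat n_{i-1}-1)}$, and use the defining inequality $M \geq (b_M + 3w_M)/(1 - z^{-1})$ to absorb the $w_M$-sized discrepancies into the $wiggle$ term when the currency index shifts. The bookkeeping must show the residual budget after $\pi^i$ is at least $(wiggle + \ell_i) z^{-(\hat n - 1)} + cost(M_{\ge \hat n})$ with the new $\hat n$.

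The main obstacle I anticipate is the currency-switch accounting at the block boundary: when the play enters a new block $M_n$, the energy is within $w_M$ of the boundary but generally not exactly on it, and the new currency $z^{-n}$ differs from the old by a factor of $z$ from the previous block's. I must show the ``wiggle room'' term $wiggle = 2w_M + b_M$ (in the appropriate currency) covers (a) the up to $w_M$ energy overshoot past the boundary, which in the worst case has been paid in the more expensive currency, (b) the up to $w_M$ undershoot on re-entry, and (c) the $b_M$-sized bid that may straddle the boundary; the factor $(1 - z^{-1})^{-1}$ in the choice of $M$ is exactly what makes the geometric loss from currency conversion,$\sum_{i \ge n} cost(M_i) = M z^{-(n-1)}/(z-1)$ relative to the per-block cost, controllable. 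I would separate the four sub-cases of Figure~\ref{fig:cases-recurrent} (increasing vs.\ decreasing, currency going up vs.\ staying) and verify each reduces to the single scalar inequality $M(1 - z^{-1}) \ge b_M + 3w_M$. Everything else is routine telescoping of the energy sums and payment sums; the delicate part is purely the constant-chasing at the seam between blocks.
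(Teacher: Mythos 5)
Your proposal follows the paper's proof essentially step for step: induction on $i$ with the base case given by the choice of $k_I$, the same four-case analysis at block boundaries from Figure~\ref{fig:cases-recurrent}, Corollary~\ref{cor:max-budget} supplying the asymmetric exchange rates (invest at the block's currency when increasing, gain at the next-higher currency when decreasing), and the defining inequality $M(1-z^{-1}) \ge b_M + 3w_M$ absorbing the wiggle-room and currency-conversion losses at the seams. The currency bookkeeping you flag as the delicate point is exactly where the paper's own argument does its constant-chasing, so the approaches coincide.
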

\begin{proof}
The proof is by induction. The base case follows from our choice of initial energy. For $i\geq 1$,  assume the claim holds for $\pi^{i-1}$ and we prove for $\pi^i$. There are four cases for the energy changes in $\pi_i$, which we depict in Figure~\ref{fig:cases-recurrent}. Recall that $E_M$ is maximal energy of a simple cycle from $u$ and that we switch currencies at $u$. Thus, whenever we switch currency it means that the play visits a new energy block, and the first location in the block is within $E_M$ of the boundary.

Intuitively, Case~$1$ is the simplest and follows from matching energy and budget in $N_n$. In Cases~$3$ and~$4$ \Max invests and gains in the ``wrong'' currency. For example, in Case~$3$, if investing and gaining was in the same currency, \Max would have gained in the currency of $N_n$ instead of the higher currency of $N_{n-1}$. Finally, in Case~$2$, we again use this mismatch to ensure that the gain ``covers'' the cost of $N_n$ and in addition there is a ``surplus'' that covers the required wiggle room.

Let $e^{i-1}$ be the energy at the end of $\pi^{i-1}$. Consider Cases~$1$,~$3$, and~$4$ in the figure. We prove the first of these case and the others are similar. In Case~$3$, we have $e^{i-1} \in N_{n+1}$, and $\pi_i$ decreases into $N_n$ and $e^i$ is near $\beta^\uparrow_n$. Thus, we have  $\ell_{i-1} = (\beta^\downarrow_{n+1} + w_M) - e^{i-1}$ and $\ell_i = (\beta^\downarrow_{n+1} + w_M) -e^i$. Since we decrease in blocks, we have $\ell_{i-1} < \ell_i$ and  $E(\pi_i) = \ell_{i-1} - \ell_i$. By Lemma~\ref{lem:max-budget}, we have $z^{n+1} \cdot pay(\pi_i) \geq z\cdot (\ell_{i-1} - \ell_i)$, thus the gain in budget in $\pi_i$ is at least $(\ell_i - \ell_{i-1})z^{-n}$. The induction hypothesis states that \Max's budget in $\pi^{i-1}$ is at least $(E_M+\St_M + \ell_{i-1}) \cdot z^{-n} + \sum_{j=n}^\infty N  z^{-j}$, thus his budget after $\pi^i$ is at least $(E_M+\St_M + \ell_i) \cdot z^{-n} + \sum_{j=n}^\infty N  z^{-j}$, and we are done. 
The final case, which is similar to Case $2$ in the figure with a slight difference; the figure depicts energy that crosses $N_n$ and we prove for a energy that crosses $N_{n+1}$ and ends in $N_n$. That is, the energy at $\pi^{i-1}$ is in $N_{n+1}$ and $e^{i-1} \geq \beta^\uparrow_{n+1}-E_M$ and $e^i \leq \beta^\downarrow_{n+1} = \beta^\uparrow_n$. The decrease in energy is $E(\pi_i) = (2E_M - \ell_{i-1}) + (N-2E_M) + \ell_i$, thus by Lemma~\ref{lem:max-budget}, the increase in budget is $E(\pi_i) \cdot z^{n-1}$. We chose $N$ such that $(N-2E_M) \cdot z^{-(n-1)} \geq (E_M + \St_M)\cdot z^{-(n-1)} + N\cdot z^{-n}$. The claim follows from combining with the induction hypothesis, and we are done.
\begin{figure}[ht]
\centering
\includegraphics[height=5cm]{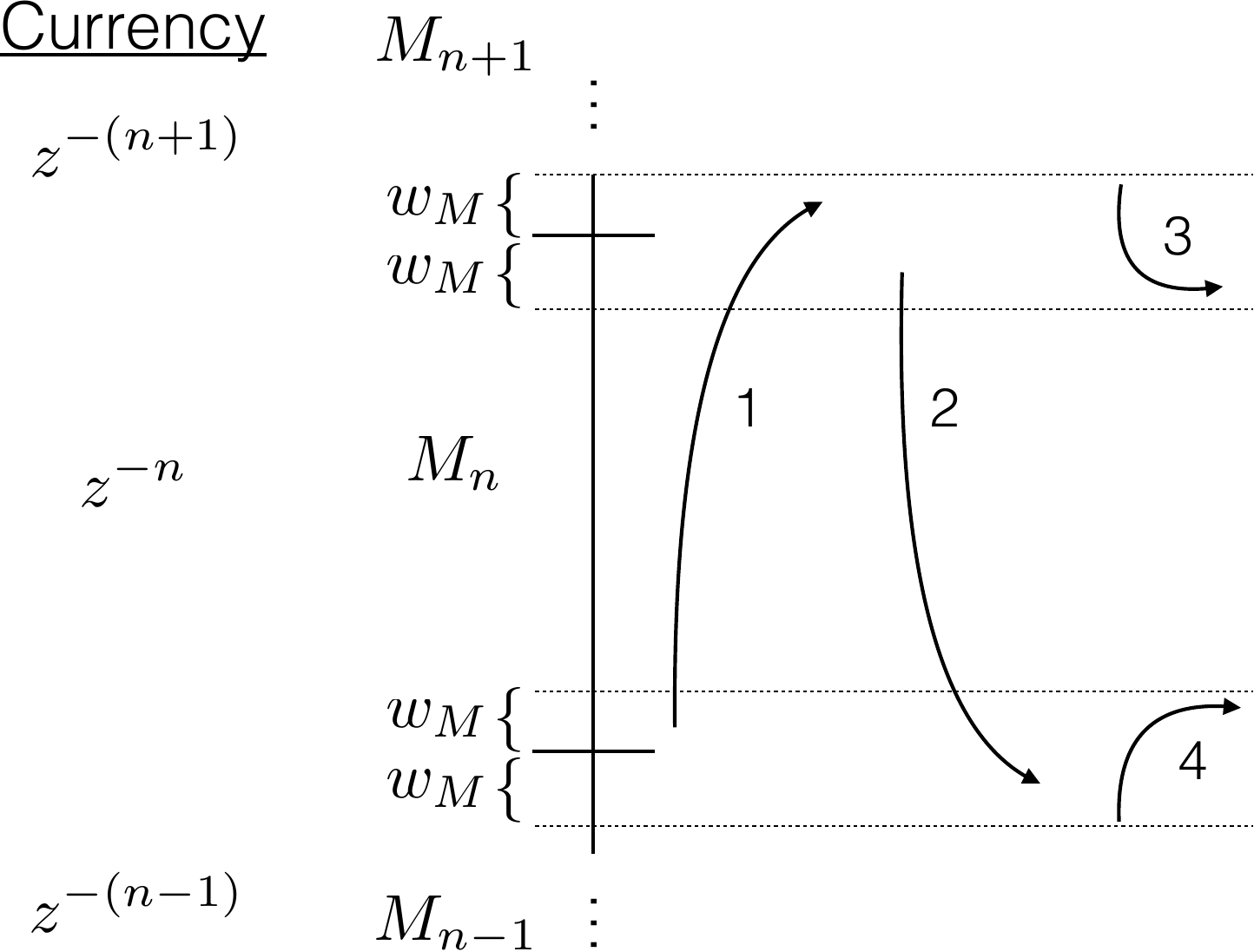}
\caption{An illustration of the different cases of changing currency. Dark lines mark the boundary of an energy block and dotted lines mark a region of size $E_M$ around the boundary.}
\label{fig:cases-recurrent}
\end{figure}
\end{proof}

It is not hard to show that Lemma~\ref{lem:max-invariant} implies that $f_M$ is legal. That is, consider a finite play $\pi$ that starts immediately after a change in currency. Using Lemma~\ref{lem:magic-max}, we can prove by induction on the length of $\pi$ that \Max has sufficient budget for bidding. The harder case is when $\pi$ decreases, and the proof follows from the fact that $wiggle$ is in the higher currency of the lower block. Combining Lemma~\ref{lem:max-invariant} with our choice of the initial energy, we get that the energy never reaches $0$ as otherwise \Min invests a budget of more than $1$. The following theorem follows by combining with Lemma~\ref{lem:simplify-max}.

\begin{theorem}
In a recurrent mean-payoff bidding game $\G$ with $\MP(\RT(\G))> 0$, with any positive initial budget \Max has a strategy that guarantees a positive payoff.
\end{theorem}

\subsubsection{An optimal \Max strategy in general strongly-connected mean-payoff bidding games}
In this section we develop the ideas of the previous section and construct an optimal strategy for \Max in general strongly-connected games. Recall that by Lemma~\ref{lem:simplify-max} it suffices to construct a strategy that guarantees that the energy is bounded from below. The following example shows that naively adapting the strategy from the previous section fails.

\begin{figure}[ht]
\centering
\includegraphics[height=7cm]{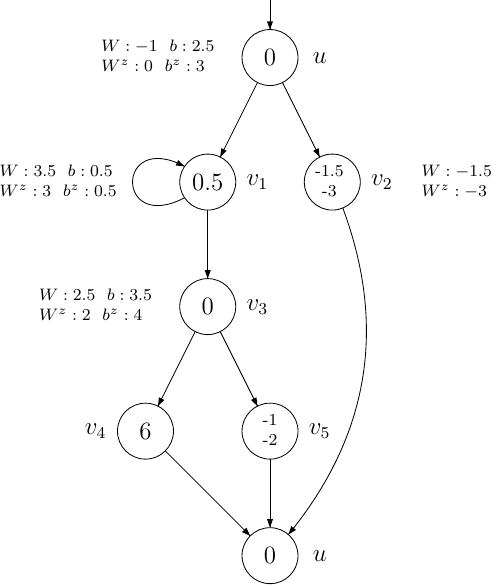}
\caption{An example showing that the \Max strategy developed in the previous section fails in general strongly-connected games. In a vertex $v$ with negative weight, the weight $w(v)$ of $v$ is depicted on top and $w^z(v)$ on the bottom. We choose $z=2$.}
\label{fig:counterexample}
\end{figure}
\begin{example}
\label{ex:counterexample}
Consider the strongly-connected mean-payoff bidding game $\G$ that is depicted in Figure~\ref{fig:counterexample}. Note that $\G$ is not recurrent. Indeed, the candidates for the root would be $u$ and $v_1$ and there are cycles that avoid both of them. We choose $z = 2$. With this choice, we have $v_1^+ = v_1$. Indeed, $\Po^z(v_1) > \Po^z(v_3)$. Thus, according to the strategy in the previous section, upon winning a bidding in $v_1$, \Max chooses the self-loop to stay in $v_1$. Since the weight of $v_1$ is positive, staying in $v_1$ implies an increase of energy, which implies a decrease of budget. Since \Max avoids exhausting his budget, the currency must change in $v_1$. In other words, \Max cannot wait for a visit to $u$ to change the currency. The inability to wait for visits to the root is the challenge of devising a strategy in general strongly-connected games.

A naive solution would be to drop the assumption from the previous section that currency changes occur only at $u$. That is, we change currency upon entering a new energy block no matter what the current vertex is. We illustrate that this attempt fails, implying that a more involved adaptation is needed. The problem is with sinusoidal energy behaviors that occur on the boundary of an energy block. We describe such a play. Consider the cycle $u, v_1, v_1, v_3, v_5, u$, which intuitively corresponds to \Max winning two biddings, then losing two biddings, and we ignore $v_5$ since both players bid $0$. In $\G^z$, we have equality between energy and budget. Indeed, we have $\St^z(u) + \St^z(v_1) - \St^z(v_1) - \St^z(v_3) = 3 + 0.5 - 0.5 - 4 = -1 = w^z(u) + w^z(v_1) + w^z(v_1) + w^z(v_3) + w^z(v_5)$. Note that the ``real'' energy is the one in $\G$ and it is unchanged following this path since $2w(v_1) + w(v_5) = 0$.

Assume we start from $u$ when the current energy is at the top of the third energy block. Recall that $z= 2$, thus the initial currency is $z^{-3} = 1/8$. After visiting $v_1$ twice, the energy increases and enters the fourth block, thus the currency is updated to $1/16$. Adding the currencies to the calculation above, we get $\St^z(u)\cdot z^{-3} + \St^z(v_1)\cdot z^{-3} - \St^z(v_1)\cdot z^{-4} - \St^z(v_3)\cdot z^{-4} = 3\cdot 1/8 + 0.5 \cdot 1/8- 0.5 \cdot 1/16 - 4 \cdot 1/16 > 0$. All in all, \Max's payments are positive, thus his budget decreases, while the energy level stays the same. \Min can thus continue with such a strategy until \Max's budget is exhausted.\hfill\qed
\end{example}

We develop further the ingredients from the previous sections. Recall that in recurrent games, we split the natural numbers into energy blocks, each block has a currency, where increasing the energy by $c$ units in the $n$-th block costs \Max at most $c$ units of budget in the currency of the $n$-th block, and decreasing the energy by $c$ units in the $n$-th block rewards him with at least $c$ units of budget in the higher currency of the $(n-1)$-th block. In general strongly-connected games, we need stronger properties. First, we increase the asymmetry between investing and gaining: while gaining in the $n$-th block is still in the higher currency of the lower $(n-1)$-th block, investing is now in the lower currency of the higher $(n+1)$-th block. Thus, now, in every change to the energy within an energy block, \Max registers a profit. The larger the change in energy, the larger the profit. Second, we differentiate between even blocks and odd blocks so that odd blocks serve as ``buffers'' that ensure that a change in currency only occurs after a significant change in energy. 

We formalize this intuition. Consider a strongly-connected mean-payoff bidding game $\G = \zug{V, E, w}$ having $\MP(\RT(\G)) > 0$. For $z > 1$, let $\tilde{\G}^z = \zug{V, E, \tilde{w}^z}$, where
\[
\tilde{w}^z(v) = \begin{cases} w(v) \cdot z & \text{ if } w(v) < 0 \\
w(v) \cdot \frac{1}{z} & \text{ if } w(v) \geq 0\end{cases}\]
As in the previous section, it is not hard to choose $z > 1$ such that $\MP(\RT(\tilde{\G}^z)) > 0$. Let $\tilde{\Po}^z$ and $\tilde{\St}^z$ denote the potentials and strengths in $\tilde{\G}^z$, and for a finite play we denote by $\tilde{E}^z$ be the sum of weights that $\pi$ traverses in $\tilde{\G}^z$. The proof of the following lemma is similar to Lemma~\ref{lem:E-Ez}.

\begin{lemma}
\label{lem:general-Max-connecting}
Consider a finite play $\pi$. We have $\tilde{E}^z(\pi) \leq z\cdot E(\pi)$ and $\tilde{E}^z(\pi) \leq\frac{1}{z} \cdot E(\pi)$. 
\end{lemma}

We describe \Max's strategy, which we refer to as $f_M$. As in the previous section, \Max chooses a $k_I \in \Nat$ and plays as if that is the initial energy while guaranteeing that the energy never reaches $0$. We specify $k_I$ later. When reaching a vertex $v \in V$, \Max bids $\tilde{\St}^z(v) \cdot \gamma$ and moves to $v^+$ upon winning, where we define the currency $\gamma \in (0,1)$ next. We partition $\Nat$ into blocks of $\tilde{N} \in \Nat$, where we choose $\tilde{N}$ later on. We refer to the $n$-th block as $\tilde{N}_n = \set{\tilde{N}\cdot (n-1),\ldots, \tilde{N}n-1}$. Unlike the previous section, changes in currency can occur in all vertices and only depend on the energy. The currency in even and odd blocks differs. For $n \in \Nat$, when the energy level reaches an even block $\tilde{N}_{2n}$,  the currency is $z^{-n}$. In order to determine the currency in the odd blocks, we take the history of the play into account; the currency matches the currency in the last energy block that was visited before entering $\tilde{N}_{2n+1}$. Thus, if it is $\tilde{N}_{2n}$, then the currency is $z^{-n}$ and if it is $\tilde{N}_{2n+2}$, the currency is $z^{-(n+1)}$. We say that a finite outcome is {\em $\gamma$-consistent} when all the bids \Max performs in it are made in the same currency $\gamma$. Lemma~\ref{lem:magic-max} clearly applies to $\tilde{\G}^z$. Let the maximal weight of a vertex in $\G$ be $w_M = \max_{v \in V} |W(v)|$. The following lemma follows from combining Lemma~\ref{lem:magic-max} with Lemma~\ref{lem:general-Max-connecting}.

\begin{lemma}
\label{lem:magic-Max-general}
Consider a $(z^{-n})$-consistent outcome $\pi$ that starts in $v$ and ends in $v'$. We have $-pay(\pi) \geq -E(\pi) \cdot z^{-(n-1)} - 2w_M\cdot z^{-n}$ and $pay(\pi) \leq  E(\pi) \cdot z^{-(n+1)} + 2w_M \cdot z^{-n}$. 
\end{lemma}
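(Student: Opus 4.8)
The plan is to derive Lemma~\ref{lem:magic-Max-general} by combining the ``magic'' inequality for $\gamma$-consistent outcomes (the analogue of Lemma~\ref{lem:magic-Max}) with the energy-comparison bounds of Lemma~\ref{lem:general-Max-connecting}. First I would record the $\gamma$-consistent version of Lemma~\ref{lem:magic-Max}: if $\pi$ is $(z^{-n})$-consistent, starts in $v$, and ends in $v'$, then $\tilde W^z(v) - \tilde W^z(v') \leq \tilde E^z(\pi) - z^n \cdot \B_M(\pi)$, which rearranges to $z^n \cdot \B_M(\pi) \leq \tilde E^z(\pi) - \tilde W^z(v) + \tilde W^z(v')$. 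Using $|\tilde W^z(x)| \leq w_M$ for every vertex (note $w_M = \max_v |W(v)|$ bounds the values in $\tilde\G^z$ as well, since we may take $w_M$ large enough, or absorb the constant; this is the one bookkeeping point to be careful about), we get $-\tilde W^z(v) + \tilde W^z(v') \leq 2w_M$, hence $z^n \cdot \B_M(\pi) \leq \tilde E^z(\pi) + 2w_M$, i.e. $\B_M(\pi) \leq \tilde E^z(\pi)\cdot z^{-n} + 2w_M \cdot z^{-n}$.

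Next I would feed in the two inequalities of Lemma~\ref{lem:general-Max-connecting}, namely $\tilde E^z(\pi) \leq z \cdot E(\pi)$ and $\tilde E^z(\pi) \leq \frac{1}{z}\cdot E(\pi)$, choosing whichever sign of $E(\pi)$ makes each useful. For the upper bound on $\B_M(\pi)$: when the relevant quantity is $E(\pi) \geq 0$ we want the smaller multiplier, so $\tilde E^z(\pi) \leq \frac{1}{z}E(\pi)$ gives $\B_M(\pi) \leq \frac{1}{z}E(\pi) \cdot z^{-n} + 2w_M z^{-n} = E(\pi)\cdot z^{-(n+1)} + 2w_M z^{-n}$, which is exactly the claimed second inequality; and when $E(\pi) < 0$ the bound $E(\pi)\cdot z^{-(n+1)}$ is even weaker than $\tilde E^z(\pi) z^{-n}$ would dictate, so the inequality still holds (using $\tilde E^z(\pi) \le z E(\pi) < 0 < E(\pi) z^{-(n+1)} + \cdots$ — more precisely one checks $\tilde E^z(\pi) z^{-n} \le E(\pi) z^{-(n+1)}$ in that regime). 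For the lower bound $-\B_M(\pi) \geq -E(\pi) z^{-(n-1)} - 2w_M z^{-n}$: apply the same magic inequality read as $z^n \B_M(\pi) \geq \tilde E^z(\pi) - \tilde W^z(v) + \tilde W^z(v') \geq \tilde E^z(\pi) - 2w_M$ — wait, that direction needs care, so here I would instead invoke the symmetric/lower half of the $\gamma$-consistent Lemma~\ref{lem:magic-Max} (the analogue that bounds $\B_M$ from below when \Max loses biddings), giving $z^n \B_M(\pi) \geq z \cdot \tilde E^z(\pi)^{-}$-type term; concretely $-\B_M(\pi) \ge -\tilde E^z(\pi) z^{-n} - 2w_M z^{-n}$, and then $\tilde E^z(\pi) \le z E(\pi)$ yields $-\tilde E^z(\pi) \ge -z E(\pi)$, so $-\B_M(\pi) \ge -E(\pi) z^{-(n-1)} - 2w_M z^{-n}$ as claimed.

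The main obstacle I anticipate is not any deep argument but rather getting the direction of every inequality and the sign conventions for $\B_M$ consistent: $\B_M(\pi)$ is \Max's net payment, so it is negative exactly when \Max gains budget, and the lemma is really asserting two asymmetric statements — ``gaining happens in the high currency $z^{-(n-1)}$'' and ``investing happens in the low currency $z^{-(n+1)}$'' — which is the whole point of passing through $\tilde\G^z$ where both copies of the $z$-factor ($z$ and $1/z$) are available. I would therefore organize the proof as two short paragraphs, one per inequality, each of the form ``apply Lemma~\ref{lem:magic-Max} in its $\gamma$-consistent form, bound the $\tilde W^z$ boundary terms by $2w_M$, then substitute the appropriate clause of Lemma~\ref{lem:general-Max-connecting}'', and close with the remark that when $E(\pi)$ has the ``wrong'' sign the asserted bound degrades to something trivially true, so no case split on the sign of $E(\pi)$ is actually needed in the statement. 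The total argument is a few lines of algebra once the ingredients are lined up, so I would present it tersely and refer back rather than re-derive the magic inequality.
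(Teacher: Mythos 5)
Your proposal is correct and matches the paper's (unwritten) argument exactly: the paper justifies this lemma only by the remark that it ``follows from combining Lemma~\ref{lem:magic-Max} with Lemma~\ref{lem:general-Max-connecting}'', which is precisely your derivation. One small simplification: both asserted inequalities are upper bounds on $\B_M(\pi)$ and follow from the single inequality $\B_M(\pi) \leq \tilde{E}^z(\pi)\cdot z^{-n} + 2w_M\cdot z^{-n}$ by substituting, respectively, $\tilde{E}^z(\pi) \leq z\cdot E(\pi)$ and $\tilde{E}^z(\pi) \leq \frac{1}{z}\cdot E(\pi)$ (both of which Lemma~\ref{lem:general-Max-connecting} gives unconditionally), so neither a separate ``lower half'' of the magic lemma nor the case split on the sign of $E(\pi)$ is actually needed.
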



Suppose \Max is playing according to $f_M$ and \Min is playing according to some strategy $f_m$. Let $\pi = play(f_m, f_M)$ be the resulting infinite play. Let $\pi = \pi^1 \cdot \pi^2 \cdot \ldots$ be a partition of $\pi$ into maximal finite plays that have a consistent currency. For $i \geq 1$, let $e^i \in \Nat$ be the energy at the end of $\pi^i$, thus $e^i = k_I + E(\pi^1\ldots\pi^i)$, where $k_I$ is an initial energy. Also, let $\beta^\uparrow_n$ and $\beta^\downarrow_n$ be respectively, the upper and lower boundaries of the energy block $N_n$. Note that $\beta^\uparrow_n = \beta^\downarrow_{n+1}$. 

\begin{figure}[ht]
\centering
\includegraphics[height=5cm]{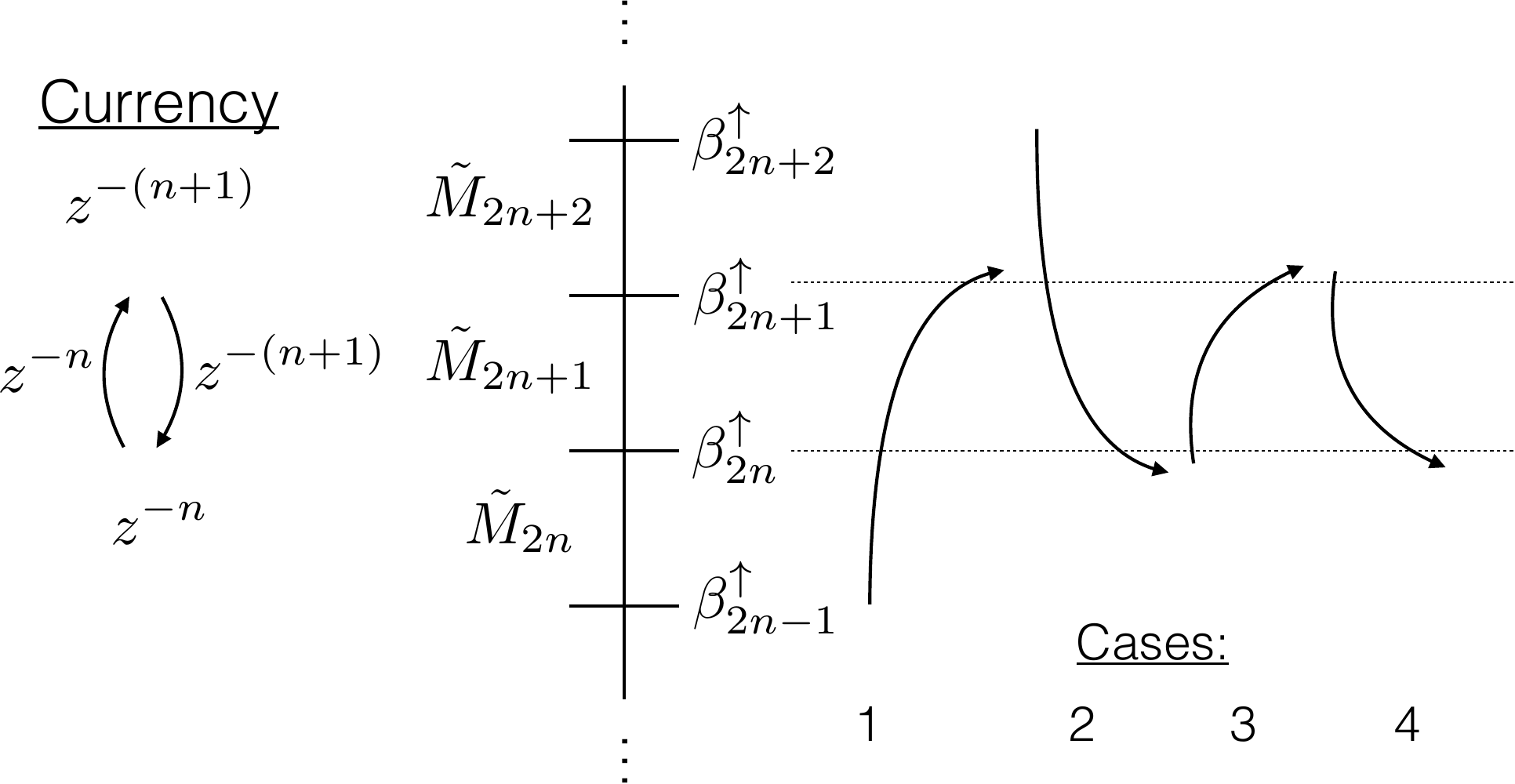}
\caption{The four cases of $\pi^i$ in the general setting.}
\label{fig:cases-general}
\end{figure}

Suppose a sub-play $\pi^i$ starts in a vertex $v$ and ends in $u$. We make observations on the budget change during $\pi^i$. There are four cases, which are depicted in Figure~\ref{fig:cases-general}. Note that the currency in Cases~$1$ and~$3$ is $z^{-n}$ and in Cases~$2$ and~$4$ it is $z^{-(n+1)}$. The energy change in $\pi^i$ in Cases~$1$ and~$2$ is at least $2\tilde{N}$ and at most $2\tilde{N}+2w_M$ and in Cases~$3$ and~$4$ it is at least $\tilde{N}$ and at most $\tilde{N}+2w_M$. We use Lemma~\ref{lem:magic-Max-general} to obtain the following:

\begin{lemma}
\label{lem:cases}
The following bounds hold for the change in budget in the four cases depicted in Figure~\ref{fig:cases-general}.
\begin{enumerate}
\item $pay(\pi^i) \leq (2\tilde{N}+2w_M)\cdot z^{-(n+1)} + 2w_M \cdot z^{-n}$,
\item $- pay(\pi^i) \geq 2\tilde{N}\cdot z^{-n} - 2w_M \cdot z^{-(n+1)}$,
\item $pay(\pi^i) \leq (\tilde{N} + 2w_M)\cdot z^{-(n+1)} + 2w_M \cdot z^{-n}$, and
\item $-pay(\pi^i) \geq \tilde{N}\cdot z^{-n} + 2w_M \cdot z^{-(n+1)}$.
\end{enumerate}
\end{lemma}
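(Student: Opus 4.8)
The plan is to read off each of the four cases of Figure~\ref{fig:cases-general} two quantities — the currency $z^{-k}$ that is in force throughout the segment $\pi^i$, and the sign and range of the energy change $E(\pi^i)$ — and then to feed these into Lemma~\ref{lem:magic-Max-general}. First I would record, from the definition of $f_M$, that within a single maximal currency-consistent segment the currency is $z^{-n}$ in Cases~$1$ and~$3$ (where the even block carrying $\pi^i$ is $\tilde{M}_{2n}$) and $z^{-(n+1)}$ in Cases~$2$ and~$4$ (where it is $\tilde{M}_{2n+2}$), and that $\pi^i$ is $\gamma$-consistent for the corresponding $\gamma$, so Lemma~\ref{lem:magic-Max-general} applies verbatim with this $\gamma$.

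Next I would pin down $E(\pi^i)$ from the block geometry. A currency-consistent segment occupies an even block and its two odd neighbours; it begins exactly at the crossing into its even block and ends exactly at the crossing out of one of the outer odd blocks, and since a single move changes the energy by at most a vertex weight (which is bounded in terms of $w_M$, because $w(v)=W(v)-\tfrac12(W(v^+)+W(v^-))$), the energy at both ends of $\pi^i$ is within a vertex weight of a block boundary. Matching this against Figure~\ref{fig:cases-general} gives the stated ranges: Cases~$1$ and~$2$ span two block-widths, hence $|E(\pi^i)|\in[2\tilde{M},2\tilde{M}+2w_M]$, while Cases~$3$ and~$4$ span one block-width, hence $|E(\pi^i)|\in[\tilde{M},\tilde{M}+2w_M]$; Cases~$1$ and~$3$ are increasing ($E(\pi^i)>0$) and Cases~$2$ and~$4$ are decreasing ($E(\pi^i)<0$).

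It then remains to substitute. For the increasing Cases~$1$ and~$3$ I would use the upper bound of Lemma~\ref{lem:magic-Max-general} with exponent $n$, namely $B_M(\pi^i)\le E(\pi^i)z^{-(n+1)}+2w_M z^{-n}$, with $E(\pi^i)\le 2\tilde{M}+2w_M$ and $E(\pi^i)\le \tilde{M}+2w_M$ respectively; for the decreasing Cases~$2$ and~$4$ I would use the lower bound $-B_M(\pi^i)\ge -E(\pi^i)z^{-n}-2w_M z^{-(n+1)}$ (the exponent being $n+1$, so $z^{-((n+1)-1)}=z^{-n}$), with $-E(\pi^i)\ge 2\tilde{M}$ and $-E(\pi^i)\ge \tilde{M}$ respectively. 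Cases~$1$, $2$ and~$3$ are then immediate arithmetic. The delicate one, and the step I expect to be the main obstacle, is Case~$4$: the crude substitution only yields $-B_M(\pi^i)\ge \tilde{M}z^{-n}-2w_M z^{-(n+1)}$, whereas the claimed bound carries $+2w_M z^{-(n+1)}$. Closing this gap is exactly where the $w_M$-sized wiggle room has to be accounted for carefully — I expect one must use that in Case~$4$ the segment has just entered its even block $\tilde{M}_{2n+2}$ from below (precisely the transition at which the currency drops to $z^{-(n+1)}$), so that the energy at the start of $\pi^i$ already sits above $\beta^\downarrow_{2n+2}$ and the segment therefore drops by strictly more than $\tilde{M}$ before the next currency change, the surplus being enough (once $\tilde{M}$ is chosen large, as it is for the invariant of the next lemma) to flip the $-2w_M z^{-(n+1)}$ error term into a gain. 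Making this overshoot bookkeeping precise, and keeping it consistent with the later choice of $\tilde{M}$ and with the currency-switching discipline, is the part that needs care.
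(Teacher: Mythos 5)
Your derivation is essentially the paper's: the paper gives no separate proof of Lemma~\ref{lem:cases}, it simply records the currency in force ($z^{-n}$ in Cases~$1$ and~$3$, $z^{-(n+1)}$ in Cases~$2$ and~$4$) and the energy-change ranges (at least $2\tilde{M}$ and at most $2\tilde{M}+2w_M$ in Cases~$1$ and~$2$; at least $\tilde{M}$ and at most $\tilde{M}+2w_M$ in Cases~$3$ and~$4$), and substitutes into Lemma~\ref{lem:magic-Max-general} exactly as you do. Items~$1$--$3$ of your argument are correct and match. Regarding the ``main obstacle'' you flag in Case~$4$: you are chasing a sign typo in the statement, not a genuine gap in your reasoning. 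The direct substitution yields $-B_M(\pi^i)\ge \tilde{M}z^{-n}-2w_M z^{-(n+1)}$, and this is the bound the paper actually uses: the surrounding discussion explicitly calls Case~$4$ the ``only problematic case'' in which the gain equals the cost of the block only \emph{up to a constant}, and the invariant $Inv(\beta^\uparrow_{2n+1})$ carries an extra surplus term of $2w_M\cdot z^{-(n+1)}$ precisely to absorb this deficit. If item~$4$ really held with $+2w_M z^{-(n+1)}$, Case~$4$ would not be problematic and no surplus would be needed. Your proposed overshoot argument cannot rescue the literal ``$+$'' anyway: the guaranteed drop in Case~$4$ is only $\tilde{M}$ (both endpoints sit within $w_M$ of their respective boundaries, and the slack can go either way), so there is no forced extra decrease of order $4w_M z^{-1}$ to convert the error term into a gain. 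You should state item~$4$ with a minus sign and move the compensation into the budget invariant, as the paper does.
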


To conclude the construction, given an initial \Max budget, we find an initial energy level $k_I$ with which \Max can guarantee that the energy stays positive. We do this by finding an invariant on his budget at the end points of energy blocks. Recall the intuition that \Max's budget should not run out even when the energy increases arbitrarily. We thus require his initial budget to be sufficient to ``purchase'' all the energy blocks above the initial energy. For $n \in \N$, the cost of the blocks $\tilde{N}_{2n}$ and $\tilde{N}_{2n+1}$ is $\tilde{N} \cdot z^{-n}$. 

Recall from the previous section that \Max's budget at the bottom of an energy block $\tilde{N}_{2n}$, needs to include, in addition to the costs of the energy blocks $\tilde{N}_{\geq 2n}$, {\em wiggle room} in the currency of the lower block. Going back to Lemma~\ref{lem:cases}, we observe that Case~$4$ is the only problematic case. Indeed, in all other cases, the path $\pi^i$ crosses an energy block whose cost is given in a currency that is lower than the currency of gaining (when decreasing), or higher than the currency of investing (when increasing). Take for example Case~$2$. It crosses both $\tilde{N}_{2n+2}$ and $\tilde{N}_{2n+1}$. The cost of $\tilde{N}_{2n+2}$ is $\tilde{N} \cdot z^{-(n+1)}$ whereas the gain for it is roughly $\tilde{N} \cdot z^{-n}$. The situation in Case~$4$ is not that bad. The gain equals the cost of $\tilde{N}_{2n+1}$, i.e., $\tilde{N}\cdot z^{-n}$, up to a constant, i.e., $2w_M \cdot z^{-n}$. We add this constant in the invariant, thus we require \Max's budget at $\beta^\uparrow_{2n+1}$ to include the costs of the higher blocks, the wiggle room, and a surplus of $2w_M \cdot z^{-n}$. 

We define the invariant on \Max's budget formally. Recall that $wiggle = 2w_M + \tilde{\St}^z_M$, where $\tilde{\St}^z_M$ is the maximal bid, and it is used to guarantee that $f_M$ is legal in a play that stays in an energy block. We write $Inv(\beta^\uparrow_\ell)$ to refer to the budget that \Max has when the currency changes near $\beta^\uparrow_\ell$, thus within $|w_M|$ of $\beta^\uparrow_\ell$. We have the following.
\begin{itemize}
\item $Inv(\beta^\uparrow_{2n}) = wiggle \cdot z^{-n} + z^{-n}\tilde{N} +\sum_{i=n+1}^\infty 2z^{-i}\tilde{N}$, and
\item $Inv(\beta^\uparrow_{2n+1}) = wiggle \cdot z^{-n} + 2w_M\cdot z^{-(n+1)} + \sum_{i=n+1}^\infty 2z^{-i}\tilde{N}$.
\end{itemize}

To conclude the construction, we choose $\tilde{N}$ to be large enough so that the invariant is maintained assuming it is maintained initially. Also, given an initial budget for \Max, we choose an initial energy level such that the invariant is initially maintained. Combining with Lemma~\ref{lem:simplify-max}, we obtain the second direction of Theorem~\ref{thm:MP-SCC}.

\begin{theorem}
\label{thm:MP-max}
Consider a strongly-connected mean-payoff bidding game $\G$ with $\MP(\RT(\G)) > 0$. Then, \Max has a strategy that guarantees a positive payoff in $\G$.
\end{theorem}

\subsection{Remarks}
\subsubsection{Results for other bidding mechanisms}
\label{sec:poorman}
We elaborate on further results on infinite-duration bidding games that were obtained since an earlier publication of this paper. The bidding mechanism that we study in this paper is called {\em Richman bidding}. {\em Poorman bidding} is the same as Richman bidding only that the winner of the bidding pays the ``bank'' rather than the other player. {\em Taxman bidding} span the spectrum between Richman and poorman bidding. It is parameterized by a constant $\tau \in [0,1]$: portion $\tau$ of the winning bid is paid to the other player, and portion $1-\tau$ to the bank. Richman bidding is obtained by setting $\tau=1$ and poorman bidding by setting $\tau=0$. Unlike Richman bidding, in both of these mechanisms, the sum of budgets is not constant throughout the game. The central quantity that is studied is thus the {\em ratio} of the players' budget: suppose that for $i \in \set{1,2}$, \PLi's budget is $B_i$, then \PO's ratio is $B_1/(B_1 + B_2)$. Note that \PO's ratio coincides with his budget in Richman bidding. For qualitative games, the central question is the existence of a {\em threshold ratio}, which is the straightforward adaptation of the threshold budgets we use (see Definition~\ref{def:thresh}). 

Reachability games with poorman and taxman bidding have been studied in \cite{LLPSU99}. It is shown that while threshold ratios exist in reachability poorman and taxman games, the structure of the game is more complicated and no probabilistic connection is known and it is unlikely to exist: already in the reachability game that is depicted in Figure~\ref{fig:reach}, the threshold ratios with poorman bidding are irrational numbers. 

Infinite-duration bidding games with poorman bidding were studied in \cite{AHI18} and with taxman bidding in \cite{AHZ19Arxiv}. Given the probabilistic connection for reachability Richman-bidding games (Theorem~\ref{thm:reach}), the probabilistic connection for mean-payoff Richman-bidding games (Theorem~\ref{thm:MP-SCC}) may not be unexpected. On the other hand, since no probabilistic connection is known for reachability poorman-bidding games, we find the following probabilistic connection for mean-payoff poorman- and taxman-bidding games surprising. The ideas that were developed in the constructions in this paper played a key role in the proof of the following theorem.
\begin{theorem}\label{thm:poorman}
\cite{AHI18,AHZ19Arxiv}
Consider a strongly-connected mean-payoff taxman game $\G$ and a constant $\tau \in [0,1]$. The optimal payoff \Max can guarantee with an initial ratio $r \in (0,1)$ in $\G$ equals the value of the biased random-turn game $\RT^{F(\tau,r)}(\G)$, for $F(\tau,r) = \frac{r+\tau\cdot (1-r)}{1+\tau}$, in which in each turn \Max is chosen with probability $F(\tau, r)$ and \Min with probability $1-F(\tau,r)$. In particular, for poorman bidding, the optimal payoff in $\G$ with initial ratio $r$ equals $\MP(\RT^r(\G))$. 
\end{theorem}
Theorem~\ref{thm:poorman} sheds new light on Theorem~\ref{thm:MP-SCC}. Richman bidding is the exception of taxman bidding: For every $\tau < 1$, the optimal payoff depends both on the structure of the game and the initial ratio. Only in Richman bidding does the optimal payoff depend only on the structure of the game and not on the initial ratio. For example, recall that in the game that is depicted in Figure~\ref{fig:loops}, with Richman bidding, \Min can guarantee a non-positive payoff no matter what positive initial budget he starts with (using the tit-for-tat strategy for example). With poorman bidding, on the other hand, when \Max's initial budget is $2$ and \Min's initial budget is $1$, \Max's initial ratio is $\frac{2}{3}$, and the optimal payoff that \Max can guarantee is $\frac{2}{3}\cdot 1 + \frac{1}{3} \cdot (-1) = \frac{1}{3}$. Theorem~\ref{thm:poorman} implies an interesting connection between Richman and poorman bidding: the value in a mean-payoff bidding game with Richman bidding equals the value with poorman bidding and ratio $0.5$.

\subsubsection{An existential proof of Theorem~\ref{thm:MP-SCC}}
\label{sec:existential}
We describe an alternative existential proof of Theorem~\ref{thm:MP-SCC} that relies on a combination of the probabilistic connection for reachability bidding games that are played on infinite graphs \cite{LLPSU99} and results on probabilistic models \cite{BBEK11,BB+10}. The draw-back of this proof is that it does not give any insight on how to construct optimal strategies. That is, given a strongly-connected mean-payoff bidding game $\G$, using Theorem~\ref{thm:MP-SCC} and the existential proof, the only knowledge we obtain is the optimal payoff a player can guarantee. There is no hint, however, on how to construct a strategy that achieves this payoff, which, as can be seen in the previous sections, can be a challenging task. 

\begin{proof}[Existential proof of Theorem~\ref{thm:MP-SCC}]
Consider a strongly-connected mean-payoff bidding game $\G = \zug{V, E, w}$, where $w: V \rightarrow \Nat$. The {\em one-counter game}\footnote{Sometimes called an {\em energy game} \cite{BFLMS08}.} that corresponds to $\G$, denoted $\OCG(\G)$, is played on the same graph only with a different objective: a counter tracks the energy in an infinite play $\pi$, and $\pi$ is winning for \Min iff there exists a finite prefix in which the energy is $0$. That is, \Max wins $\pi$ iff the energy stays positive in every finite prefix of $\pi$. A {\em configuration} of $\OCG(\G)$ is a pair $\zug{v, n} \in V \times \Nat$, which intuitively means that the token is placed on $v$ and the accumulated energy (the counter value) is $n$. Lemmas~\ref{lem:simplify-Min} and~\ref{lem:simplify-max} can be rephrased to show the following correspondence between winning in the one-counter bidding game $\OCG(\G)$ and guaranteeing an optimal payoff in $\G$:

\noindent{\bf Claim:} If the threshold budget in every configuration $\zug{v,n}$ in $\OCG(\G)$ is $0$, i.e.,  \Min wins with any positive initial budget, then with every positive initial budget, \Min guarantees a non-positive payoff in $\G$. On the other hand, if for every vertex $v \in V$ and a positive initial budget $B_M > 0$ of \Max there is an initial energy $n \in \Nat$ such that $B_M > 1-\thresh(\zug{v,n})$ in $\OCG(\G)$, i.e., \Max can prevent \Min from winning when the game starts from $\zug{v,n}$, then \Max can guarantee a positive payoff in $\G$. 

The game $\OCG(\G)$ is a reachability bidding game that is played on an infinite graph. Formally, we have $\OCG(\G) = \zug{V \times \Nat, E', T}$, where $\zug{v', n'}$ is a neighbor of a vertex $\zug{v,n}$ iff $\zug{v,v'} \in E$ and the update to the counter is correct and stays non-negative, i.e., $n' = n + w(v)$ if $n' \geq 0$ and $n'=0$ otherwise, and the target for \Min is the set of vertices $V \times \set{0}$. A key property of this game is that even though the graph is infinite, the number of outgoing edges from each vertex is at most $|E|$ and in particular finite. The proof in \cite{LLPU96} of the probabilistic connection for reachability bidding games (Theorem~\ref{thm:reach}) extends to reachability games on infinite graphs in which all vertices have a finite out-degree. Thus, we have the following.

\noindent{\bf Claim:} The games $\OCG(\G)$ and $\RT(\OCG(\G))$ are equivalent: the threshold budget in a configuration $\zug{v,m} \in V \times \Nat$ in $\OCG(\G)$ equals the value of $\zug{v,n}$ in $\RT(\OCG(\G))$, i.e., the probability of winning under optimal play.

The game $\RT(\OCG(\G))$ is a stochastic game with a one counter. Such games have been shown to have the following properties.

\noindent{\bf Claim:} \cite{BBEK11,BB+10} When $\MP(\RT(\G)) \leq 0$, the value of every configuration $\zug{v,n}$ in $\RT(\OCG(\G))$ is $0$. When $\MP(\RT(\G)) > 0$, for every $v \in V$, the sequence $val(\RT(\OCG(\G)), \zug{v,n})$ tends to $0$ as $n$ tends to infinity. 

The proof of the theorem follows from combining the three claims.
\end{proof}

\subsubsection{Strategy complexity}
In this section we discuss the memory requirements of the strategies that we construct for mean-payoff bidding games, which we call the {\em complexity} of the strategy. The complexity of a strategy is important since strategies are typically used to implement systems, and the complexity of the strategy translates to the complexity of the system. In all three strategies, when the token is placed on a vertex $v$, the strategy always prescribes the same vertex to move to upon winning the bidding, namely $v^-$ for \Min and $v^+$ for \Max, and the bid is of the form $\St(v) \cdot \gamma$, where $\St(v)$ is a constant and $\gamma$ is the normalization factor, which changes as the game proceeds. Thus, a strategy uses memory only for determining the normalization factor. 

In \Min's strategy, recall that $\gamma$ is of the form $1/N$, where $N \in \Nat$ is chosen immediately after the energy hits $0$. To compute the normalization, \Min's strategy uses two variables that take integer values. One keeps track of the current energy level in order to observe that it hits $0$ and that a new $N$ needs to be chosen. The second variable keeps the current choice of $N$. 

In \Max's strategy in recurrent games, the normalization, which is called the currency of the energy block, changes in the root vertex of the game depending on the energy level. \Max's strategy again uses two variables that take integer values. The first keeps track of the energy and the second keeps the index of the energy block in the last visit to the root. In a vertex that is not the root, \Max computes the normalization by referring to the stored index of the energy block.

Finally, in \Max's strategy in general strongly-connected games, the normalization changes when the energy visits an even energy block. Again, \Max's strategy can be implemented using two variables that keep track of the current energy and the index of the energy block whose currency is currently being used.

\section{Discussion and Future Directions}
\label{sec:disc}
We introduce and study infinite-duration bidding games in which the players bid for the right to move the token. 
We showed the existence of threshold budgets in parity bidding games by reducing them to reachability bidding games. We also showed the existence of threshold budgets in mean-payoff bidding games. The key to the qualitative solution was a quantitative solution to strongly-connected mean-payoff bidding games: we showed that these games are equivalent to uniform random-turn games in the sense that the optimal payoff a player can guarantee in the bidding game equals the expected payoff in the stochastic game with optimal play. Thus, we show that the initial budgets do not matter in mean-payoff bidding games with the bidding rules we use, namely Richman bidding. That is, the payoff depends only on the structure of the game and not on the initial budgets. As we elaborate in Section~\ref{sec:poorman}, this is not the case with other bidding mechanisms, where the payoff depends both on the structure of the game and the initial budgets. 

This work belongs to a line of works that transfer concepts and ideas between the areas of formal verification and algorithmic game theory \cite{NRTV07}. Examples of works in the intersection of the two fields include logics for specifying multi-agent systems \cite{AHK02,CHP10,MMPV14}, studies of equilibria in games related to synthesis and repair problems \cite{CHJ06,Cha06,FKL10,AAK15}, non-zero-sum games in formal verification \cite{BBMU15,BBPG12,CMJ04}, and applying concepts from formal methods to {\em resource allocation games} such as rich specifications \cite{AKT16}, efficient reasoning about very large games \cite{AGK18,KT17}, reasoning about resource interfaces \cite{CA+03}, and a dynamic selection of resources \cite{AHK16}. 

We discuss some directions for future work. We studied the computational complexity of finding threshold budgets, which we formally define as the \THRESHBUD problem, and showed that for the objectives we consider, the problem is in NP and coNP using the reduction to random-turn games. We leave open the problem of finding a tighter classification for \THRESHBUD. Our result hints that the problem is not NP-hard. A tighter classification would be, optimistically, a polynomial-time algorithm for \THRESHBUD, or, pessimistically, showing that \THRESHBUD is as hard as solving general simple stochastic games, which is a problem in NP and coNP for which no polynomial-time algorithm is known.

In Section~\ref{sec:existential}, we discussed one-counter games in which \Min wins if the energy hits $0$ once in a play. Note that unlike parity and mean-payoff, this objective is not {\em prefix independent}. The complexity of \THRESHBUD in one-counter games is interesting and is related to recent work on optimizing the probability of reaching a destination in a weighted MDP \cite{HK15,RRS15}.  For acyclic one-counter bidding games, the problem is PP-hard using a result in \cite{HK15}, and for a single-vertex games the problem is in P using the direct formula of \cite{Kat13}. For general games the problem is open.


\stam{
\paragraph*{Open problems}
In mean-payoff bidding games, we classify strongly-connected games in Lemma~\ref{lem:classification}. 

We leave open the problem of finding a \Max strategy in general strongly-connected mean-payoff game having a vertex $u$ with $W(u) >0$. 

We leave open the complexity of the problem of finding the threshold budgets in a bidding game. Recall that we reduce the threshold problem in parity games and mean-payoff games to Richman games, thus an improvement in the simple model of Richman games implies an improvement for the more general models. For Richman games, we make slight progress on \cite{LLPSU99} by showing that the problem is in NP and coNP, thus is not likely to be NP-hard. We conjecture that solving this problem is reducible from solving simple stochastic games. Such a reduction will tie this problem with many other problems whose known complexity lies between NP$\cap$coNP and P. 

\paragraph*{Energy bidding games} In this work we focused mainly on parity and mean-payoff games, and we touched energy bidding games in our solution to mean-payoff bidding games. Energy bidding games are interesting in their own right and have many open questions. We show a necessary and sufficient condition for \Killer to win in a strongly-connected energy bidding game. We show that if \Killer wins, he has a winning strategy and we leave open the problem of finding a memoryless strategy for \Surv when he wins. Also, the computational complexity of finding the Richman value in general energy bidding games is wide open and is tied to open questions on rewarded Markov chains. Note that unlike parity and mean-payoff, the energy objective is not prefix independent, so we cannot trivially reduce an energy bidding game to a Richman game by classifying the BSCCs of the graph to winning and losing. We can show initial results for two very special cases. For acyclic energy bidding games, the problem is PP-hard using a result in \cite{HK15}, and for a single vertex with two self loops having two integer weights (similar to Figure~\ref{fig:loops}), we can show that the problem is in P by viewing the game as an instance of a general gambler's ruin problem and using the direct formula of \cite{Kat13}.

\paragraph*{Multi-player bidding games} Multi-player games are well-studied in formal methods. Indeed, modeling the interaction of a system against a hostile environment is often a crude abstraction, rather the environment typically consists of other processes each with its own objective. A multi-player game arises from this setting. Studying multi-player bidding games is an interesting direction for future research. The rules of the bidding need to be adapted, and a natural adjustment is the following: all players submit bids, and the winner of the bidding moves the token and splits his payment evenly between the losers. 

We illustrate the difficulties that arise from these games. Consider a simple multi-player game in which \PO tries to reach a goal and the other players form a {\em coalition} to prevent him from doing so. Alternatively, the members of the coalition share the same reachability objective, which is different than \PO's objective. We note that there is no longer a ``threshold budget'' for \PO. Rather, we need to consider restrictions on the ratios between \PO's budget and the budgets of the players in the coalition. We find it surprising that even for three players the set of budget points from which \PO can win is not necessarily convex as we show in the following example.
\begin{example}
Consider a three-player game in which \PO needs to win two biddings in order to reach his goal and Players~$2$ and~$3$ need to win once in order to prevent him from reaching. Thus, the graph consists of four vertices $\set{a,b,t_1, t_{2,3}}$, where $a$ is the initial vertex, $t_1$ is the target of \PO, and $t_{2,3}$ is the target of Players~$2$ and~$3$, and there are edges from $a$ to $b$ and $t_1$ and from $b$ to $t_1$ and $t_{2,3}$. We write $\zug{x,y,z}$ to refer to the case where the initial budgets of the players are $x$, $y$, and $z$, respectively. 

Consider first the case where $\zug{\frac{1}{3}-\epsilon, \frac{2}{3}+\epsilon,0}$. We claim that \PO wins. Indeed, in the first round he bids all his budget. To avoid losing in the first round, \PT must overbid, say with $\frac{1}{3}-\epsilon + \epsilon'$. Recall that the winner of a bidding splits the bid evenly between the two losers, thus the new budgets are $\zug{\frac{1}{2}-\epsilon_1, \frac{1}{3}+\epsilon_2, \frac{1}{6}+\epsilon_3}$. Now, \PO can bid $\frac{1}{2}-\epsilon_1$, win the bidding, and draw the game to $t_1$. Symmetrically, \PO wins when the initial budgets are $\zug{\frac{1}{3}-\epsilon, 0,\frac{2}{3}+\epsilon}$. 

Consider now the case where the initial budgets are $\zug{\frac{1}{3}-\epsilon, \frac{1}{3}+\epsilon/2, \frac{1}{3}+\epsilon/2}$. We claim that \PO loses. Indeed, \PT wins the first bidding with a bid of $\frac{1}{3}+\epsilon/2$, leaving Player~$3$ with a budget of more than $1/2$ in the second round, where he wins. 

Thus, there are two points in which \PO wins, but in their midpoint he loses. Thus, the set of points from which \PO wins is not convex. \hfill\qed
\end{example}

In multi-player games, the players' objectives are typically not contradictory. Then, the typical question that is asked regards the existence and computation of {\em stable strategy profiles},  e.g., a {\em Nash equilibrium}. We show that in multi-player bidding games, a stable strategy profile need not exist.
\begin{example}
Consider a mean-payoff game that is played on a graph with five nodes $\set{s, a, t_1, t_2, t_3}$, and edges from $s$ to $t_1$ and $a$, and from $a$ to $t_2$ and $t_3$. There are self loops on the vertices $t_1, t_2$, and $t_3$. If the game reaches $t_1$, \PO's value is $1$ and the other players get a value of $0$. If the game reaches $t_2$, \PT gains $2$, Player~$3$ gains $1$, and \PO gains $0$, and dually for $t_3$. The game starts at vertex $s$ and consider initial budgets of $\zug{\frac{1}{3}-\epsilon, \frac{1}{3}+\epsilon/2,\frac{1}{3}+\epsilon/2}$. Note that rational Players~$2$ and~$3$ will bid at least $\frac{1}{3} - \epsilon$ in $s$ so that their outcome is at least $1$. On the other hand, both of these players want to lose in the first round. Indeed, if \PT (and symmetrically Player~$3$) wins the first bidding, he pays at least $\frac{1}{3}$. Thus, at $a$, the player who has the highest budget is Player~$3$ who wins the second bidding and draws the game to $t_3$. 
\end{example}
}
\section*{Acknowledgments}
We thank Petr Novotn\'y and Rasmus Iben-Jensen for helpful discussions and pointers.

\small
\bibliographystyle{plain}
\bibliography{../ga}
\end{document}